\tikzset{point/.style={fill,draw,circle,minimum size=2,inner sep=0pt,outer sep=0pt}}
\tikzset{vertex/.style={point,minimum size=5}}
\tikzset{ccc/.style={vertex,fill=white}}
\tikzset{color1/.style={vertex,fill=red}}
\tikzset{color2/.style={vertex,fill=lime}}
\tikzset{box/.style={draw,fill=black,minimum width=1.2cm, minimum height=.5cm,inner sep=0pt,outer sep=0pt}}
\tikzset{nulllike/.style={-,dashed}}
\DeclareMathOperator{\An}{An}        
\DeclareMathOperator{\Co}{Co}		     
\newcommand{\ket}[1]{|#1\rangle}     
\DeclareMathOperator{\Pa}{Pa}				 
\DeclareMathOperator{\Ch}{Ch}				 
\DeclareMathOperator{\DAG}{DAG}			 
\DeclareMathOperator{\AC}{AC}				 
\DeclareMathOperator{\conv}{conv}		 
\DeclareMathOperator{\aff}{aff}			 
\DeclareMathOperator{\ext}{ext}			 
\DeclareMathOperator{\ord}{ord}			 
\newtheorem{theorem}{Theorem}		 		 
\newtheorem{lemma}[theorem]{Lemma} 	 
\theoremstyle{definition}				     
\newtheorem{definition}{Definition}  
\begin{document}
\title{The Möbius game and other Bell tests for relativity}
\author{Eleftherios-Ermis Tselentis}
  \affiliation{QuIC, Ecole Polytechnique de Bruxelles, C.P.\ 165, Université Libre de Bruxelles, 1050 Brussels, Belgium}
\author{\"Amin Baumeler}
	\affiliation{Facolt\`a di scienze informatiche, Universit\`a della Svizzera italiana, 6900 Lugano, Switzerland}
	\affiliation{Facolt\`a indipendente di Gandria, 6978 Gandria, Switzerland}

\begin{abstract}
	\noindent
	We derive multiparty games that, if the winning chance exceeds a certain limit, prove the incompatibility of the parties' causal relations with any partial order.
	This, in turn, means that the parties exert a back-action on the causal relations; the causal relations are {\em dynamical.}
	The games turn out to be representable by directed graphs, for instance by an orientation of the {\em M\"obius ladder.}
	We discuss these games as device-independent tests of spacetime's dynamical nature in general relativity.
	To do so, we design relativistic settings where, in the Minkowski spacetime, the winning chance is bound to the limits.
	In contrast, we find otherwise tame processes with classical control of causal order that win the games deterministically.
	These suggest a violation of the bounds in gravitational implementations.
	We obtain these games by uncovering a ``pairwise central symmetry'' of the correlations in question.
	This symmetry allows us to recycle the facets of the {\em acyclic subgraph polytope\/} studied by Gr\"otschel, J\"unger, and Reinelt in the mid-80s for combinatorial optimization.
	In addition, we derive multiparty games in a scenario where the polytope dimension grows only linearly in the number of parties.
	Here, exceeding the limits not only proves the dynamical nature \mbox{of the causal relations, but also that the correlations are incompatible with {\em any global causal order.}}
\end{abstract}

\maketitle
\section{Introduction}
Bell~\cite{bell1964}, in his seminal work, showed that quantum correlations are causally inexplicable~\cite{bell1975}.
Such {\em nonlocal\/} correlations arise as the spontaneous spacelike separated creation of fresh albeit strongly correlated data~\cite{wolf2017}.
Any tentative causal explanation of these correlations---without regressing to the formalism of quantum theory~\cite{lorenz2022}---necessitates an infinite speed of communication~\mbox{\cite{coretti2011,bancal2012,barnea2013,transitivequantum24}} and infinite precision~\cite{wood2015};
a clash with relativity.
In order to show his result, Bell employed a {\em device-independent technique.}
He derived limits on correlations---expressed via inequalities---from assumptions on the observed data only, without invoking any specifics of any theory.
If experimental observations exceed these limits {\it i.e.,} a Bell inequality is violated, then one is forced to reject one of the assumptions.
This is known as a  {\em Bell test.}
Bell inequalities are commonly expressed as collaborative multiparty games with a bound on the winning probability.
Bell's discovery is not only considered as one of the most fascinating in quantum theory, but also substantially boosted the development of quantum information~\cite{nielsenchuang2010,unspeakables2}:
The informational abstraction and simplification of quantum mechanics as a theory of qubits.
In this work, we extend Bell's reasoning to {\em relativity,}
and obtain Bell tests to certify the {\em dynamical nature of causal relations.}
In addition, this might---similar to the quantum case---aid in the development of a theory of ``gravity information.''

\subsection{Dynamical and indefinite causal order}
Quantum nonlocal correlations, as mentioned above, indicate the central role of {\em causality\/} in unifying quantum theory with relativity.
As pointed out by Hardy~\cite{hardy2005}, there is a complementary aspect that hints at the same.
On the one hand, general relativity is a {\em deterministic\/} theory equipped with a {\em dynamical\/} spacetime: The causal relations among events depend on the mass-energy distribution.
On the other, quantum theory is a {\em probabilistic\/} theory equipped with a {\em static\/} spacetime.
Therefore, it is expected that a theory of quantum gravity features {\em indefinite causal order,} {\it e.g.,} by extending quantum superposition to causal relations~\mbox{\cite{colnaghi2012,oreshkov2012,chiribella2013,portmann2017,zych2019}}.

Exemplary, indefinite causal order arises from the celebrated {\em quantum switch\/}~\cite{chiribella2013} (see Figure~\ref{fig:switch}).
Here, the causal relation between two experiments, Bob's and Charlie's, is coherently controlled by Alice's experiment in their common causal past.
\begin{figure}
	\centering
	\begin{tikzpicture}[out=90,in=270,looseness=1.25]
		\node[box] (A) at (0,0) {\color{white}\footnotesize\textbf{Alice}};
		\node[box] (B) at (-1,1.5) {\color{white} \footnotesize\textbf{Bob}};
		\node[box] (C) at (1,1.5) {\color{white} \footnotesize\textbf{Charlie}};
		\draw[->] (A.north) to (B.south);
		\draw[->] (B.north) to (C.south);
		\draw[->] (C.north) to (0,3);
		\draw[dashed,->] (A.north) to (C.south);
		\draw[dashed,->] (C.north) to (B.south);
		\draw[dashed,->] (B.north) to (0,3);
	\end{tikzpicture}
	\caption{The quantum switch: Depending on whether Alice prepares a control qubit in the state~$\ket 0$ or~$\ket 1$, a~target system~$\ket\psi$ traverses Bob's laboratory before Charlie's or {\em vice versa.} If Alice prepares the control qubit in a superposition state, then the trajectory of the target system is entangled with the control qubit: The causal relation between Bob and Charlie is {\em indefinite.}}
	\label{fig:switch}
\end{figure}
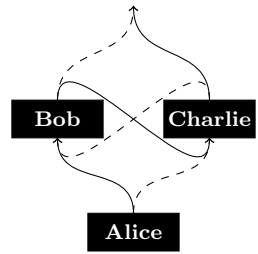
More specifically, Alice may perform an experiment such that Bob's experiment is performed {\em before\/} Charlie's, or another such that Bob's experiment is performed {\em after\/} Charlie's.
The causal relation between Bob's and Charlie's experiment is {\em dynamical.}
If Alice, however, performs both experiments in superposition, then {\em quantum indefiniteness\/} is injected into the causal relations, and dynamical causal order is turned into quantum indefinite causal order.
It is well-known~\mbox{\cite{araujo2015,oreshkov2016,wechs2021}} that no Bell test can certify the quantum indefinite causal order as exhibited by the isolated quantum switch and its generalizations~\cite{wechs2021}---unless additional assumptions are~\mbox{invoked~\cite{zych2019,tein2022,gogioso2023}}.
This is natural for the following reason.
In stark contrast to the Bell scenario~\cite{bell1964} where nonsignaling correlations are established, here, the limits of communication are of relevance.
But in unrestricted settings, all quantum communication can always be simulated classically.

\subsection{The M\"obius game}
In this work, we devise Bell tests for dynamical causal order.
Events obey a non-dynamical, or {\em static,} causal order whenever the parties' actions, which constitute the events, do not alter the causal relations.
In other words, events obey static causal order whenever the causal relations form a {\em partial order.}
A violation of the presented inequalities, that bound the correlations to respect a static causal order, implies then that the parties {\em altered\/} the causal relations.
These inequalities are---as we will see later---also violated by the quantum switch.
While, as mentioned above, the quantum indefinite nature of the causal relations in the isolated quantum switch cannot be certified, the {\em dynamical part indeed can.}

These inequalities are the facets of the partial-order-correlation polytope.
It turns out that the minimum of a~{\em single\/} output bit suffices to obtain such Bell tests.
In fact, the dimension of the polytope grows only quadratically as~$2n(n-1)$ in the number of parties~$n$.
We find that a projection of the polytope is the polytope of directed acyclic graphs (DAGs).
The latter has been studied by Gr\"otschel, J\"unger, and Reinelt~\cite{acyclic1985} in the mid-80s in the context of discrete optimization.
Moreover, we find that we can lift the facet-defining inequalities of this DAG polytope to obtain the relevant Bell games.
The key insight for this step is that the polytope of interest is what we call ``pairwise centrally symmetric;''
a symmetry that reflects the possibility of the parties to relabel the output bit.

An example of such a Bell test is represented by a~directed M\"obius ladder (see Figure~\ref{fig:mooebiusladder}).
\begin{figure}
	\centering
	\begin{tikzpicture}[every path/.style={-stealth,thick}]
		\def\k{5}
		\def\radius{1}
		\def\d{.5}
		\pgfmathtruncatemacro{\rot}{90+360/(2*\k)}
		\foreach \i in {1,2,...,\k} {
			\node[vertex] (v\i) at (\rot+360-\i*360/\k:\radius+\d) {};
			\node[vertex] (w\i) at (\rot+360-\i*360/\k:\radius) {};
		}
		\foreach \i in {1,3,...,\k}
			\draw (v\i) -- (w\i);
		\foreach \i in {2,4,...,\k}
			\draw (w\i) -- (v\i);
		\pgfmathtruncatemacro{\km}{\k-1}
		\foreach \i [evaluate=\i as \j using int(\i+1)] in {1,3,...,\km} {
			\draw (w\i) -- (w\j);
			\draw (v\j) -- (v\i);
		}
		\foreach \i [evaluate=\i as \j using int(\i+1)] in {2,4,...,\km} {
			\draw (w\j) -- (w\i);
			\draw (v\i) -- (v\j);
		}
		\draw (w\k) -- (v1);
		\draw (w1) -- (v\k);
	\end{tikzpicture}
	\caption{A directed M\"obius ladder~\cite{guy1967}. This graph represents a Bell game for ten or more parties with which dynamical causal order is detected.}
	\label{fig:mooebiusladder}
\end{figure}
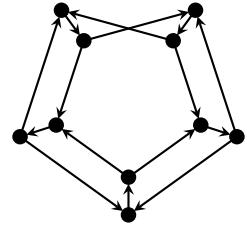
As a preview, the M\"obius game is the following.
Each vertex on the graph represents a party.
A referee picks at random an arc~$(s,r)$ from the graph, and a bit~$x$.
Then, the referee broadcast the chosen arc to all parties, and the bit~$x$ to the ``sender''~$s$ only.
The game is won whenever the ``receiver''~$r$ outputs~$x$.
If the winning probability exceed~$11/12$, then the parties' causal arrangement is {\em incompatible\/} with any partial order or mixtures thereof, and we must conclude that, if we want to retain a causal explanation, then the parties' actions {\em influenced\/} the causal relations.
Note that a violation of such an inequality could also be explained by abandonding the notion of a causal order, e.g., by the use of closed time-like curves.
In addition to the M\"obius game, we find other games represented by directed graphs.
The M\"obius game, however, has a clear advantage.
The bound of~$11/12$ holds for any finite number of parties involved.
In contrast, the bounds of the other games approach one for an increasing number of parties.
This favorable situation renders the tests reliable, and show that the feature of dynamical causal order is non-vanishing, as it is the case for~\mbox{nonlocality~\cite{collins2002,seevinck2002}} and noncausality~\cite{baumeler2020}.

\subsection{Relativity}
We apply our results to relativity.
Here, we find arrangements of events, defined by the {\em crossing of light beams,} such that the derived inequalities are never violated within special relativity.
If the games were played in a~gravitational setting, however, the dynamical spacetime of general relativity may be certified.
In this work, we do not carry out general-relativistic calculations, but support our claim via the description of classically well-behaved processes that win these games deterministically.
Moreover, these processes are conservative in the following sense:
An event can only influence its causal future, including the causal relations among the events within its causal future.
Such processes were recently investigated in detail due to their affinity for physical implementations~\cite{wechs2021}.
If a displacement of matter in general relativity alters the causal relations of events within the future lightcone, then these processes may be implemented in general relativity, and the back-action of matter to the spacetime structure can be certified.

\subsection{Outline}
We start with a description of our notation, and necessary basics on convex polytopes and graphs.
Thereafter, we provide the definition and derive properties of what we call {\em pairwise centrally symmetric polytopes.}
This symmetry refines central symmetry, and is our key-method for the projection and lifting of the polytopes (in particular, see Theorem~\ref{thm:liftingfacets}).
After that, we describe the correlations of interest, and compute the facet-defining Bell games (see Theorem~\ref{thm:bellgames}).
This is followed by a shorter part where the scenario is further simplified.
While the resulting polytope grows only linearly as~$2n$ in the number of parties~$n$, we show that the derived Bell game cannot distinguish between partial-order and {\em causal correlations\/} (see Theorem~\ref{thm:bellgamesimplified}).
The latter describes correlations where the parties may alter the causal relations of all parties in their future.
Then, we introduce the framework of {\em processes and causal models,} and show how these games are won deterministically therein (see Theorem~\ref{thm:winning}).
This is followed by a discussion of the results in the context of special and general relativity.
We end this work with conclusions and a series of open questions.

\section{Notation, polytopes, and graphs}
We use~$[n]$ for the set~$\{0,1,\dots,n-1\}$, and~$[n]^2_{\neq}$ for the set of all {\em distinct\/} pairs over~$[n]$, {\it i.e.,}~$[n]\times[n]\setminus\{(k,k)\}_{k\in[n]}$.
Lowercase letters will usually be used to express values, calligraphic letters sets, and bold letters matrices and vectors.
We simplify the common notation of~$P_{A\mid X}$ for a conditional probability distribution of~$A$ given~$X$, and~$P_{A\mid X}(a|x)$ for the conditional probability that~$A$ takes value~$a$ with~$X$ having value~$x$, by only referring to the probability-density function, {\it e.g.,}~$p(a|x)$.
We may always index the entries of a~$d$-dimensional vector~$\bm v$ with the natural numbers in~$[d]$, {\it i.e.,}~$\bm v=(v_i)_{i\in[d]}$.
The vector~$\bm 0$ is the all-zero vector,~$\bm e_i$ is~$\bm 0$ with a one in dimension~$i$, and~$\bm 1$ is the all-one vector.
The dimension of these vectors is always understood from the context.
The symbol~$\oplus$ is used for element-wise addition modulo two.
For a~collection~$\{\Delta_k\}_{k\in[n]}$ of objects labeled by~$[n]$ and a set~$\mathcal S\subseteq [n]$, we use~$\Delta_{\mathcal S}$ to denote the natural composition of the elements~$\{\Delta_k\}_{k\in\mathcal S}$, {\it e.g.,} the expression~$(a_k)_{k\in\mathcal S}\in\bigtimes_{k\in\mathcal S}\mathcal A_k$ may equally be written as~$a_{\mathcal S}\in\mathcal A_{\mathcal S}$.
We use the underlined symbol~$\underline\Delta$ whenever the set of labels is~$[n]$, and we may write~$\underline a\in\underline{\mathcal A}$.
Also, we use the subscript~\mbox{$\setminus \mathcal T$}, where~$\mathcal T\subseteq[n]$ is a set, for the natural composition over~$[n]\setminus\mathcal T$.
We further extend this notation in relation to a partial order~$\preceq_\sigma$ over~$[n]$:
The expression~$\Delta_{\preceq_\sigma k}$ for some~$k\in[n]$ denotes the natural composition over the elements in~$\{i\in[n] \mid i\preceq_\sigma k\}$, and similarly in the non-reflexive case~$\prec_\sigma$.

In this work, we employ the theory of convex polytopes~\cite{polytopes}.
A convex polytope~$\mathcal P\subseteq\mathbb R^d$ is the convex hull of a~finite set of vectors~$\mathcal S\subseteq\mathbb R^d$, denoted by~$\mathcal P = \conv(\mathcal S)$.
This is known as its~$V$-representation.
Equivalently,~$\mathcal P$ is the intersection of a finite set of halfspaces~$\mathcal P=\{\bm{x}\in\mathbb R^d\mid \bm A\bm x\leq \bm z\}$ for a set of~$m$ inequalities given by~$\bm A\in\mathbb R^{m\times d},\bm z\in\mathbb R^m$.
This is known as the~$H$-representation.
Since we always consider {\em convex\/} polytopes, we sometimes omit the term {\em convex\/} in the remaining of this work.
The dimension~$\dim(\mathcal P)$ of~$\mathcal P$ is the dimension of its affine hull~$\aff(\mathcal P)$.
The polytope~$\mathcal P\in\mathbb R^d$ is {\em full-dimensional\/} if it is~$d$-dimensional, {\it i.e.,} if its affine hull is the ambient space.
If~$\mathcal P$ is full-dimensional, then there exists a {\em unique\/} (up to multiplication)~$H$-representation.
A~vector~$\bm p\in\mathcal P$ is {\em extremal\/} if~$\bm p\not\in\conv(\mathcal P\setminus\{\bm p\})$.
The set of extremal vectors of~$\mathcal P$ is~$\ext(\mathcal P)$.
If~$\ext(\mathcal P)\subseteq\{0,1\}^d$, then~$\mathcal P$ is called a~{\em $0/1$ polytope.}
A linear inequality~\mbox{$\bm w\cdot\bm x\leq c$}, with~\mbox{$\bm w\in\mathbb R^d$} and~\mbox{$c\in\mathbb R$}, is {\em valid for~$\mathcal P$\/} whenever~$\forall \bm p\in\mathcal P: \bm w\cdot\bm p\leq c$.
We represent inequalities as pairs~$(\bm w,c)$.
An inequality~$(\bm w,c)$
is {\em trivial\/} if~$\bm w$ contains at most a single nonzero entry,
and {\em non-negative\/} if all entries of~$\bm w$ are non-negative.
If~$(\bm w,c)$ is valid for~$\mathcal P$, then~$\mathcal F:=\mathcal P\cap \{\bm p\in\mathbb R^d\mid\bm w \cdot\bm p =c\}$ is a {\em face\/} of~$\mathcal P$.
Faces of dimension~$\dim(\mathcal P)-1$ are called {\em facets,} and inequalities that give rise to facets are called {\em facet defining.}
If~$\bm c+\bm \delta\in\mathcal P \Leftrightarrow \bm c - \bm\delta\in\mathcal P$ for some center~$\bm c\in\mathcal P$, then~$\mathcal P$ is called {\em centrally symmetric.}

We also make extensive use of graph theory~\cite{bangjensen2009}.
A~graph~$G=(\mathcal V,\mathcal E)$ consists of a finite and non-empty set of vertices~$\mathcal V$ and a set of edges~\mbox{$\mathcal E\subseteq\{\{u,v\}\mid u,v\in \mathcal V\}$}.
For two graphs~$G$ and~$G'=(\mathcal V',\mathcal E')$, the {\em Cartesian product\/}~$G\times G'$ is the graph with the vertices~\mbox{$\mathcal V\times \mathcal V'$}, and~$\{(u,u'),(v,v')\}$ is an edge if and only if either~$u=v$ and~$(u',v')\in\mathcal E'$, or~$u'=v'$ and~$(u,v)\in\mathcal E$.
Each region bounded by the edges of a planar graph, {\it i.e.,} a graph drawn on the plane, is called a {\em face.}
A face surrounded by edges is called {\em internal face.}
A {\em directed\/} graph ({\em digraph\/} for short)~$D=(\mathcal V,\mathcal A)$ is a graph where the edges have a~direction, {\it i.e.,}~$\mathcal V$ is a finite and non-empty set of vertices, and~\mbox{$\mathcal A\subseteq \mathcal V\times \mathcal V$} is a set of ordered pairs called {\em arcs.}
The {\em order\/}~$\ord(G)$ of a graph (digraph) is the cardinality~$|\mathcal V|$ of its vertex set.
A {\em bipartite\/} graph (digraph) is two-colorable, {\it i.e.,} each vertex can be colored with one out of two colors such that adjacent vertices have different colors.
A~digraph~$D$ is an {\em orientation\/} of a graph~$G$ if there exists an ordering of the elements in each edge~$o(\mathcal E)$, such that~\mbox{$\mathcal A=o(\mathcal E)$}.
In a digraph, a {\em directed path\/} is a sequence of connected arcs~$((v_0,v_1),(v_1,v_2),\dots)$ where no vertex is revisited, and a {\em directed cycle\/} is a directed path where the last vertex coincides with the first.
A directed cycle with~$k$ arcs is a~{\em $k$-cycle.}
A digraph~$D$ is {\em simple\/} if it does not contain self-loops, {\it i.e.,}~$\forall v\in \mathcal V:(v,v)\not\in \mathcal A$.
A simple digraph is a {\em directed acyclic graph\/} (DAG for short), if it does not contain any directed cycles.
The set of all DAGs of order~$n$ is~$\DAG_n$, and the set of all DAGs is~$\DAG$.
If~$D'=(\mathcal V',\mathcal A')$ with~$\mathcal V'\subseteq \mathcal V$ and~$\mathcal A'\subseteq \mathcal A$, then~$D'$ is a {\em subdigraph\/} of~$D$, and we write~$D'\subseteq D$.
A simple digraph~$D=(\mathcal V,\mathcal A)$ of order~$n$ may be represented by its {\em adjacency vector\/}~$\bm \alpha(D):=(\alpha_{u,v})_{(u,v)\in [n]^2_{\neq}}\subseteq\{0,1\}^{n(n-1)}$, where~$\alpha_{u,v}$ is one if~$(u,v)\in\mathcal A$, and zero otherwise.

Basic graphs are the {\em line graph~$L_n$\/} with vertices~$[n]$ and edges~$\{\{i,i+1\}\mid i\in[n-1]\}$,
and the {\em complete bipartite graph~$K_{m,n}$\/} with vertices~$\{0\}\times [m] \cup \{1\}\times [n]$ and edges~$\{\{(0,u),(1,v)\}\mid u\in[m],v\in [n]\}$.
Basic digraphs are the {\em complete digraph~$K^{\text{\textnormal di}}_{n}$\/} with vertices~$[n]$ and arcs~$[n]^2_{\neq}$, and the~{\em $k$-cycle digraph~$C_k$} with vertices~$[n]$ and arcs~\mbox{$\{(i,i+1)\mid i\in[n-1]\}\cup\{(n-1, 0)\}$}.

\section{Pairwise central symmetry}
As key-method to establish our results, we introduce and study pairwise centrally symmetric 0/1 polytopes.
Pairwise central symmetry is a specific form of central symmetry in which the dimensions are paired, and the polytope has a center per pair.
For instance, if we pair the first two dimensions, then there exists a vector~$\bm c$ in the polytope such that~$\bm c+(x,y,\bm 0)$ is in the polytope if and only if~$\bm c-(x,y,\bm 0)$ is.
These polytopes contain a~``redundancy'' as the manifestation of relabelings.
\begin{definition}[Pairwise central symmetry]
	A 0/1 polytope~$\mathcal P\subseteq\mathbb R^{2d}$ is {\em pairwise centrally symmetric\/} if and only if
	there exists a permutation $\tau$ of the dimensions such that
	\begin{align}
		\begin{split}
			&\bm p\in \ext (\mathcal P) \implies\\
			\forall i\in[d]:\,&\bm p\oplus \bm e_{\tau^{-1}(i)}\oplus \bm e_{\tau^{-1}(d+i)} \in \ext (\mathcal P)
			\,.
		\end{split}
		\label{eq:pcs}
	\end{align}
	
	In the following, whenever we refer to a pairwise centrally symmetric polytope, we consider the isomorphic polytope where we reorder the dimensions such that  $\tau$ is the identity, and we define~$\bm p_0:=(p_i)_{i\in [d]}$, and~$\bm p_1:=(p_{d+i})_{i\in [d]}$.
	
\end{definition}
We start by showing two basic facts of such polytopes.
\begin{lemma}[Extension]
	\label{lemma:extension}
	The 0/1 polytope~$\mathcal P\subseteq\mathbb R^{2d}$ is pairwise centrally symmetric if and only if
	\begin{align}
		\begin{split}
			&\bm p = (\bm p_{0}, \bm p_{1}) \in \ext (\mathcal P) \Longleftrightarrow\\
			\forall \bm z\in\{0,1\}^d:\,&(\bm p_0\oplus \bm z,\bm p_1\oplus \bm z) \in \ext (\mathcal P)
			\,.
		\end{split}
		\label{eq:extension}
	\end{align}
\end{lemma}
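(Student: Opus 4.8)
The plan is to establish the biconditional in two steps, noting that the only nontrivial ingredient is a short induction on the Hamming weight of~$\bm z$; everything else follows from the $\bm z=\bm 0$ instance.

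For the direction ``\eqref{eq:extension}~$\Rightarrow$~pairwise central symmetry,'' I would specialise the universally quantified~$\bm z$ in~\eqref{eq:extension} to the unit vectors~$\bm z=\bm e_i$, $i\in[d]$. The forward implication of~\eqref{eq:extension} then states, for every~$\bm p=(\bm p_0,\bm p_1)\in\ext(\mathcal P)$, that~$(\bm p_0\oplus\bm e_i,\bm p_1\oplus\bm e_i)\in\ext(\mathcal P)$ for all~$i$, which is precisely~\eqref{eq:pcs}. So this direction costs nothing.

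For the converse, ``pairwise central symmetry~$\Rightarrow$~\eqref{eq:extension},'' fix~$\bm p=(\bm p_0,\bm p_1)\in\ext(\mathcal P)$ and~$\bm z\in\{0,1\}^d$, and write~$\mathcal S=\{i\in[d]\mid z_i=1\}=\{i_1,\dots,i_k\}$ for the support of~$\bm z$, so that~$\bm z=\bigoplus_{j=1}^k\bm e_{i_j}$. I would induct on~$k=|\mathcal S|$. When~$k=0$ we have~$\bm z=\bm 0$ and the shifted vector is~$\bm p$ itself. For the inductive step, put~$\bm z'=\bigoplus_{j<k}\bm e_{i_j}=\bm z\oplus\bm e_{i_k}$; the induction hypothesis gives~$(\bm p_0\oplus\bm z',\bm p_1\oplus\bm z')\in\ext(\mathcal P)$, and applying~\eqref{eq:pcs} to \emph{this} extremal vector with the index~$i_k$ yields~$(\bm p_0\oplus\bm z'\oplus\bm e_{i_k},\,\bm p_1\oplus\bm z'\oplus\bm e_{i_k})=(\bm p_0\oplus\bm z,\bm p_1\oplus\bm z)\in\ext(\mathcal P)$, using commutativity of~$\oplus$. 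This establishes the forward implication of~\eqref{eq:extension}; its reverse implication is free, since the~$\bm z=\bm 0$ instance of the hypothesis ``$(\bm p_0\oplus\bm z,\bm p_1\oplus\bm z)\in\ext(\mathcal P)$ for all~$\bm z$'' is just~$\bm p\in\ext(\mathcal P)$.

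No real obstacle arises here: the whole argument rests on~\eqref{eq:pcs} together with the elementary fact that the unit vectors~$\bm e_i$ generate~$(\{0,1\}^d,\oplus)$, so that an arbitrary flip pattern~$\bm z$ is reached by composing single-coordinate flips. The one point deserving a moment's attention is the reading of the quantifiers in~\eqref{eq:extension}: the equivalence is between ``$\bm p\in\ext(\mathcal P)$'' and ``$(\bm p_0\oplus\bm z,\bm p_1\oplus\bm z)\in\ext(\mathcal P)$ for every~$\bm z\in\{0,1\}^d$,'' and it is the~$\bm z=\bm 0$ case that trivialises one side of it.
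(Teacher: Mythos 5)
Your proof is correct and is exactly the argument the paper intends: the paper's entire proof is ``Iteratively apply Eq.~\eqref{eq:pcs},'' and your induction on the Hamming weight of~$\bm z$, together with the observation that the converse direction and the $\bm z=\bm 0$ instance are trivial, is just that iteration made explicit.
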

\begin{proof}
	Iteratively apply Eq.~\eqref{eq:pcs}.
\end{proof}
\begin{lemma}[Central symmetry]
	If~$\mathcal P\subseteq\mathbb R^{2d}$ is a pairwise centrally symmetric 0/1 polytope, then it is also centrally symmetric.
\end{lemma}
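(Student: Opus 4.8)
The plan is to feed Lemma~\ref{lemma:extension} the single choice $\bm z = \bm 1 \in \{0,1\}^d$ and read the resulting XOR-relabelling as an honest point reflection of $\mathbb R^{2d}$. The first step is the elementary observation that for a 0/1 vector $\bm a \in \{0,1\}^d$ one has $\bm a \oplus \bm 1 = \bm 1 - \bm a$. Hence, specialising Eq.~\eqref{eq:extension} to $\bm z = \bm 1$, pairwise central symmetry gives
\begin{equation*}
	\bm p = (\bm p_0, \bm p_1) \in \ext(\mathcal P) \iff \bm 1 - \bm p \in \ext(\mathcal P),
\end{equation*}
where now $\bm 1 \in \{0,1\}^{2d}$. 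Equivalently, the affine involution $\rho\colon \bm x \mapsto \bm 1 - \bm x$ maps $\ext(\mathcal P)$ bijectively onto itself.

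Next I would promote this to a statement about all of $\mathcal P$. Since $\mathcal P = \conv(\ext(\mathcal P))$ and $\rho$ is affine, $\rho(\mathcal P) = \conv(\rho(\ext(\mathcal P))) = \conv(\ext(\mathcal P)) = \mathcal P$, so $\bm x \in \mathcal P \iff \bm 1 - \bm x \in \mathcal P$. Now set $\bm c := \tfrac12 \bm 1 \in \mathbb R^{2d}$. Picking any $\bm p \in \ext(\mathcal P)$ — the polytope is nonempty, as required for the notion of central symmetry to be meaningful — both $\bm p$ and $\bm 1 - \bm p = \rho(\bm p)$ lie in $\mathcal P$, so their midpoint $\bm c$ lies in $\mathcal P$ by convexity. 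Finally, for any $\bm\delta \in \mathbb R^{2d}$ we have $\bm 1 - (\bm c + \bm\delta) = \bm c - \bm\delta$, so the equivalence above reads $\bm c + \bm\delta \in \mathcal P \iff \bm c - \bm\delta \in \mathcal P$, which is exactly central symmetry with center $\bm c$.

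The proof is a short two-step reduction, so there is no genuine obstacle; the only point that needs a moment's care is the identity $\bm a \oplus \bm 1 = \bm 1 - \bm a$ for 0/1 vectors, which is precisely what converts the combinatorial relabelling in Lemma~\ref{lemma:extension} into a geometric reflection through $\tfrac12\bm 1$. One could equivalently phrase the same argument with $\bm z = \bm 1$ applied directly in Eq.~\eqref{eq:pcs} iterated, but routing it through Lemma~\ref{lemma:extension} makes the single relevant instance transparent.
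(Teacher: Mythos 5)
Your proof is correct and follows essentially the same route as the paper: specialize Lemma~\ref{lemma:extension} to $\bm z=\bm 1$, observe that $\oplus\,\bm 1$ on $0/1$ vectors is the point reflection $\bm x\mapsto\bm 1-\bm x$ through $\bm c=\tfrac12\bm 1$, and conclude. The only difference is that you explicitly promote the statement from $\ext(\mathcal P)$ to all of $\mathcal P$ via convexity of the affine involution, a step the paper leaves implicit by checking the condition only on the $0/1$ points $\bm c\pm\bm\delta$ with $\bm\delta\in\{\pm 1/2\}^{2d}$.
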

\begin{proof}
	Let~$\bm c:=1/2\cdot\bm 1$ be the center,~\mbox{$\bm\delta\in\{\pm1/2\}^{2d}$}, and define the vectors~\mbox{$\bm p^{\pm}:=\bm c \pm \bm \delta\in\{0,1\}^{2d}$}.
	Now, we have the identity~$\bm p^+\oplus\bm 1 = \bm p^-$:
	Flipping all bits is equivalent to subtracting instead of adding~$\bm\delta$ to the center.
	Finally, by Lemma~\ref{lemma:extension} we have~\mbox{$\bm p^+\in\ext(\mathcal P)\Leftrightarrow \bm p^+\oplus\bm 1\in\ext(\mathcal P)$}.
\end{proof}

Pairwise centrally symmetric 0/1 polytopes have a natural projection.
\begin{definition}[Projection]
	\label{def:projection}
	Let~$\mathcal P\subseteq\mathbb R^{2d}$ be a pairwise centrally symmetric 0/1 polytope.
	We define the corresponding {\em projection map\/}~$\pi_d$ as
	\begin{align}
		\pi_d:\{0,1\}^{2d} &\rightarrow \{0,1\}^d\\
		(\bm v_0, \bm v_1) &\mapsto \bm v_0 \oplus \bm v_1
		\,.
	\end{align}
	The {\em projected polytope\/} is~\mbox{$\mathcal Q:=\conv(\pi_d(\ext(\mathcal P)))$}.
\end{definition}
This projection cancels the multitude of the extremal vectors which arises from the arbitrary vectors~$\bm z$ in Eq.~\eqref{eq:extension}.\footnote{Pairwise central symmetry can also be understood through the equivalence relation~$\bm p \sim \bm p'$ defined by~$\bm p_0\oplus \bm p_1 = \bm p'_0\oplus \bm p'_1$.
	The extremal vectors of the natural projection are then representatives of the equivalence classes.
}
In fact, all the ``information'' of~$\mathcal P$ is contained in the projection~$\mathcal Q$:
\begin{lemma}[Lifting polytope]
	\label{lemma:liftingpolytope}
	Define the {\em parametrized lifting map\/}~$\lambda_{\bm z}$ for~$\bm z\in\{0,1\}^d$ as
	\begin{align}
		\lambda_{\bm z}: \{0,1\}^d &\rightarrow \{0,1\}^{2d}\\
		\bm q &\mapsto (\bm q\oplus \bm z, \bm z)
		\,.
	\end{align}
	If~$\mathcal P\subseteq\mathbb R^{2d}$ is a pairwise centrally symmetric 0/1 polytope, and~$\mathcal Q$ the corresponding projected polytope,
	then~$\mathcal P$ is recovered from~$\mathcal Q$ via
	\begin{align}
		\ext(\mathcal P) = \bigcup_{\bm z \in \{0,1\}^d}\lambda_{\bm z}(\ext(\mathcal Q))
		\,.
	\end{align}
\end{lemma}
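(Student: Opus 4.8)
The plan is to first determine $\ext(\mathcal Q)$ explicitly, and then establish the two inclusions of the claimed equality by exhibiting, in each direction, the correct parameter for the lifting map, invoking Lemma~\ref{lemma:extension} for the nontrivial one. For the first step, note that since $\mathcal P$ is a 0/1 polytope we have $\pi_d(\ext(\mathcal P)) \subseteq \{0,1\}^d$, so $\mathcal Q$ is the convex hull of a finite set of 0/1 vectors. Every 0/1 vector is a vertex of the cube $[0,1]^d$, hence is extremal in the convex hull of any subset of the cube that contains it; consequently no element of $\pi_d(\ext(\mathcal P))$ is redundant, and $\ext(\mathcal Q) = \pi_d(\ext(\mathcal P))$. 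In particular, $\bm q \in \ext(\mathcal Q)$ holds if and only if there is some $\bm p = (\bm p_0, \bm p_1) \in \ext(\mathcal P)$ with $\bm p_0 \oplus \bm p_1 = \bm q$.

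For the inclusion $\bigcup_{\bm z} \lambda_{\bm z}(\ext(\mathcal Q)) \subseteq \ext(\mathcal P)$, fix $\bm q \in \ext(\mathcal Q)$ and any $\bm z \in \{0,1\}^d$, and pick $\bm p = (\bm p_0, \bm p_1) \in \ext(\mathcal P)$ with $\bm p_0 \oplus \bm p_1 = \bm q$ by the previous step. Apply the equivalence of Lemma~\ref{lemma:extension} with the shift vector $\bm w := \bm p_1 \oplus \bm z$: it gives $(\bm p_0 \oplus \bm w, \bm p_1 \oplus \bm w) \in \ext(\mathcal P)$, and since $\bm p_1 \oplus \bm p_1 = \bm 0$ this vector equals $(\bm q \oplus \bm z, \bm z) = \lambda_{\bm z}(\bm q)$. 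Hence $\lambda_{\bm z}(\ext(\mathcal Q)) \subseteq \ext(\mathcal P)$ for every $\bm z$.

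For the reverse inclusion, let $\bm p = (\bm p_0, \bm p_1) \in \ext(\mathcal P)$. Take $\bm z := \bm p_1$ and $\bm q := \bm p_0 \oplus \bm p_1 = \pi_d(\bm p)$, which lies in $\ext(\mathcal Q)$ by the first step. Then $\lambda_{\bm p_1}(\bm q) = (\bm q \oplus \bm p_1, \bm p_1) = (\bm p_0, \bm p_1) = \bm p$, so $\bm p \in \lambda_{\bm p_1}(\ext(\mathcal Q)) \subseteq \bigcup_{\bm z} \lambda_{\bm z}(\ext(\mathcal Q))$. Combining the two inclusions yields the stated identity.

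The only step that calls for any care is the identification $\ext(\mathcal Q) = \pi_d(\ext(\mathcal P))$, i.e.\ checking that passing to the convex hull of the projected vertex set does not drop any point that must remain a vertex; this is exactly where the 0/1 hypothesis is used. Everything else is bookkeeping with $\oplus$, relying on $\bm p_1 \oplus \bm p_1 = \bm 0$ and on the fact that $\lambda_{\bm z}$ was defined precisely so as to invert $\pi_d$ along each fibre. I do not anticipate a genuine obstacle here.
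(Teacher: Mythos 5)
Your proof is correct and follows essentially the same route as the paper's: both directions are handled by choosing the parameter $\bm z=\bm p_1$ for one inclusion and invoking Lemma~\ref{lemma:extension} with the shift $\bm p_1\oplus\bm z$ for the other. The only difference is that you explicitly justify $\ext(\mathcal Q)=\pi_d(\ext(\mathcal P))$ via the $0/1$ structure (every $0/1$ vector is a vertex of the cube, hence extremal in the convex hull of any $0/1$ set containing it), a step the paper simply asserts by reference to Definition~\ref{def:projection}; that extra care is welcome but does not change the argument.
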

\begin{proof}
	By Definition~\ref{def:projection}, the extremal vectors of~$\mathcal Q$ are~$\pi_d(\ext(\mathcal P))$.
	To prove the~$\subseteq$ direction, let~\mbox{$\bm p=(\bm p_0,\bm p_1)$} be an element in~$\ext(\mathcal P)$.
	Now, by taking~$\bm z = \bm p_1$, we get that the right-hand side contains~\mbox{$\lambda_{\bm z} ( \pi_d(\bm p))=\lambda_{\bm p_1} (\bm p_0\oplus \bm p_1) = \bm p$}.
	For the reverse direction~$\supseteq$, let~$\bm r=(\bm r_0,\bm r_1)$ be an element of the right-hand side.
	This vector arises from some~\mbox{$\bm p=(\bm p_0,\bm p_1)\in\mathcal P$} and some~$\bm z$ through the identities~$\bm r_0=\bm p_0\oplus \bm p_1\oplus \bm z$ and~$\bm r_1=\bm z$.
	Due to Lemma~\ref{lemma:extension} and by setting~$\bm y:=\bm p_1\oplus \bm z$, the vector~$\bm r$ is also an element of the left-hand side because~$(\bm r_0\oplus \bm y,\bm r_1\oplus \bm y)$ is.
\end{proof}

Our main result on pairwise centrally symmetric polytopes is that the facets of the projected polytope can be {\em recycled:}
\begin{theorem}[Lifting facets]
	\label{thm:liftingfacets}
	Let~$\mathcal P\subseteq\mathbb R^{2d}$ be a pairwise centrally symmetric 0/1 polytope, and~$\mathcal Q$ the corresponding projected polytope.
	If~$\mathcal P$ and~$\mathcal Q$ are full-dimensional, and if~$(\bm w,c)$ is a non-negative and non-trivial facet-defining inequality of~$\mathcal Q$, then~$((\bm w,-\bm w),c)$ is a non-trivial facet-defining inequality of~$\mathcal P$.
\end{theorem}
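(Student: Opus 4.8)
The plan is to verify three properties of $((\bm w,-\bm w),c)$: validity for~$\mathcal P$, non-triviality, and that the face it induces has dimension~$2d-1$. Throughout, write $\mathcal F_{\mathcal Q}:=\mathcal Q\cap\{\bm q:\bm w\cdot\bm q=c\}$ for the facet of~$\mathcal Q$ and $\mathcal F_{\mathcal P}:=\mathcal P\cap\{\bm x:(\bm w,-\bm w)\cdot\bm x=c\}$ for the candidate facet of~$\mathcal P$, and let $S$ denote the support of~$\bm w$.

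\emph{Validity and non-triviality.} Since a $0/1$ polytope is the convex hull of its extreme points, it suffices to test the inequality on each $\bm p=(\bm p_0,\bm p_1)\in\ext(\mathcal P)$. The elementary fact I would use is that for bits $a,b\in\{0,1\}$ one has $a-b\le a\oplus b$, with equality unless $(a,b)=(0,1)$. Applying this coordinate-wise and invoking $\bm w\ge\bm 0$ — the step where non-negativity is indispensable — yields $(\bm w,-\bm w)\cdot\bm p=\bm w\cdot\bm p_0-\bm w\cdot\bm p_1\le\bm w\cdot(\bm p_0\oplus\bm p_1)=\bm w\cdot\pi_d(\bm p)\le c$, the last step because $\pi_d(\bm p)\in\ext(\mathcal Q)\subseteq\mathcal Q$ by Def.~\ref{def:projection} and $(\bm w,c)$ is valid for~$\mathcal Q$. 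Non-triviality is immediate, since $(\bm w,-\bm w)$ has at least as many nonzero entries as~$\bm w$, which already has at least two.

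\emph{Identifying the face, and exhibiting $2d$ affinely independent points.} Tracking when the above chain is tight, an extreme point $(\bm p_0,\bm p_1)$ of~$\mathcal P$ lies on~$\mathcal F_{\mathcal P}$ exactly when (i) $\pi_d(\bm p)\in\mathcal F_{\mathcal Q}$ and (ii) $(\bm p_1)_i\le(\bm p_0)_i$ for every $i\in S$. Using the lifting maps of Lemma~\ref{lemma:liftingpolytope}, $\lambda_{\bm z}(\bm q)=(\bm q\oplus\bm z,\bm z)$ always satisfies $\pi_d(\lambda_{\bm z}(\bm q))=\bm q$, and (ii) becomes ``$z_i=1\Rightarrow q_i=0$ for $i\in S$.'' Hence $\lambda_{\bm 0}(\bm q)\in\mathcal F_{\mathcal P}$ for every $\bm q\in\mathcal F_{\mathcal Q}$, and $\lambda_{\bm e_k}(\bm q)\in\mathcal F_{\mathcal P}$ whenever $k\notin S$, or $k\in S$ and $q_k=0$. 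Since $\mathcal Q$ is full-dimensional, $\mathcal F_{\mathcal Q}$ has dimension $d-1$ and thus contains $d$ affinely independent vertices $\bm q_1,\dots,\bm q_d\in\ext(\mathcal Q)$. I then take the $d$ points $\lambda_{\bm 0}(\bm q_j)=(\bm q_j,\bm 0)$, together with, for each $k\in[d]$, the point $\lambda_{\bm e_k}(\bm q^{(k)})=(\bm q^{(k)}\oplus\bm e_k,\bm e_k)$, where $\bm q^{(k)}$ is a vertex of~$\mathcal F_{\mathcal Q}$ with $q^{(k)}_k=0$ when $k\in S$ (and $\bm q^{(k)}=\bm q_1$ when $k\notin S$). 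All $2d$ points are extreme points of~$\mathcal P$ by Lemma~\ref{lemma:liftingpolytope} and lie on~$\mathcal F_{\mathcal P}$; they are affinely independent because, in the differences from $(\bm q_1,\bm 0)$, the second blocks of the $\lambda_{\bm e_k}$-points are the linearly independent vectors~$\bm e_k$ whereas those of the $\lambda_{\bm 0}$-points vanish, so the two families decouple and the first-block part reduces to the affine independence of $\bm q_1,\dots,\bm q_d$. Finally, $\mathcal F_{\mathcal P}$ is a \emph{proper} face: the proper face $\mathcal F_{\mathcal Q}$ leaves some vertex $\bm q'$ of~$\mathcal Q$ with $\bm w\cdot\bm q'<c$, whence $(\bm q',\bm 0)\in\mathcal P$ violates the equality. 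With $\mathcal P$ full-dimensional, this gives $\dim\mathcal F_{\mathcal P}=2d-1=\dim\mathcal P-1$, i.e.\ the desired facet.

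\emph{Main obstacle.} The delicate point is the existence, for each $k\in S$, of a vertex of the facet~$\mathcal F_{\mathcal Q}$ whose $k$-th coordinate is~$0$. If every such vertex had $q_k=1$, then $\mathcal F_{\mathcal Q}\subseteq\{q_k=1\}$, and since $\mathcal F_{\mathcal Q}$ is $(d-1)$-dimensional its affine hull would be simultaneously $\{q_k=1\}$ and $\{\bm w\cdot\bm q=c\}$, forcing $(\bm w,c)$ to be proportional to $(\bm e_k,1)$ — contradicting non-triviality. This is the one place where the non-triviality hypothesis is genuinely used, and recognizing that one should lift with the single basis vectors~$\bm e_k$ (not arbitrary~$\bm z$) to reach the complementary $d$ directions while remaining on the face over the support of~$\bm w$ is the crux of the facet count.
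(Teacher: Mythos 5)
Your proof is correct and follows essentially the same route as the paper's: validity via the non-negativity of~$\bm w$, then $d$ saturating points lifted with~$\bm 0$ and $d$ more lifted with the unit vectors~$\bm e_k$ using facet vertices with a zero in coordinate~$k$, whose existence is guaranteed exactly by non-triviality of~$(\bm w,c)$. The only (harmless) refinements are that you need the zero coordinate only for $k$ in the support of~$\bm w$, and that you explicitly verify the face is proper — a step the paper leaves implicit.
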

\begin{proof}
	First, we show that the lifted inequality is {\em valid\/} for~$\mathcal P$.
	Thanks to Lemma~\ref{lemma:liftingpolytope}, each vector~$\bm p\in\ext(\mathcal P)$ can be expressed as~$\lambda_{\bm z}(\bm q)$ for some~$\bm q\in\mathcal Q$ and some~$\bm z$, thus
	\begin{align}
		(\bm w,-\bm w) \cdot (\bm q\oplus \bm z,\bm z)
		&=
		\sum_{i\in[d]}
		w_i(q_i\oplus z_i-z_i)
		\\
		&=
		\sum_{i\in[d]: z_i=0}
		w_iq_i
		+
		\Delta
		\,.
	\end{align}
	Since~$(\bm w,c)$ is non-negative, we have~$\Delta \leq 0$, and validity is inherited from the validity of~$(\bm w,c)$ for~$\mathcal Q$.

	In the following, we find~$2d$ affinely independent vectors in~$\ext(\mathcal P)$ that saturate the lifted inequality.
	These vectors define a~$(2d-1)$ dimensional face of~$\mathcal P$, a facet of~$\mathcal P$.
	First, note that~$(\bm w,c)$ is facet defining for~$\mathcal Q$.
	Therefore, there exists a family~$\mathcal T=\{\bm q_i\}_{i\in[d]}$ of~$d$ affinely independent vectors that saturate the inequality.
	These vectors lifted with the parameter~$\bm 0$ generate the family~$\mathcal S:=\{\lambda_{\bm 0}(\bm q_i)\}_{i\in[d]}=\{(\bm q_i,\bm 0)\}_{i\in[d]}$ of~$d$ affinely independent vectors in~$\mathcal P$.
	These vectors trivially saturate the lifted inequality: For all~$i\in[d]$, we have
	\begin{align}
		(\bm w,-\bm w) \cdot (\bm q_i,\bm 0) = \bm w \cdot \bm q_i = c
		\,.
	\end{align}
	Now, we need~$d$ such vectors in addition.
	These are obtained by an appropriate lifting of the extremal vectors.
	Towards that, we use the implication which holds for all vectors~$\bm q=(q_i)_{i\in[d]}\in\mathbb R^{d}$
	\begin{align}
		q_{j} = 0
		\implies
		\bm w\cdot \bm q
		=
		(\bm w,-\bm w) \cdot (\lambda_{\bm e_{j}}(\bm q))
		\,;
		\label{eq:imp}
	\end{align}
	if some~$\bm q$ has a zero entry in dimension~$j$, then lifting that vector with parameter~$\bm e_{j}$ yields a vector that produces the same value for the lifted inequality as~$\bm q$ for the original.
	This implication is immediate:
	\begin{align}
		(\bm w,-\bm w)\cdot (\bm q\oplus \bm e_{j}, \bm e_{j})
		=
		\sum_{i\in[d]\setminus\{j\}}
		w_iq_i
		=
		\bm w\cdot \bm q
		\,.
	\end{align}
	This means that whenever~$\bm q\in\ext(\mathcal Q)$ saturates~$(\bm w,c)$ and has a zero entry in dimension~$j$, then~$\lambda_{\bm e_j}(\bm q)$ saturates~$((\bm w,-\bm w),c)$.
	Suppose now there exists a sequence of the vectors in~$\mathcal T$, such that the~$k$th vector has a zero in dimension~$k$, {\it i.e.,} suppose there exists a function~$f:[d]\rightarrow[d]$, such that~$(\bm q_{f(k)})_k = 0$.
	Then, the additional~$d$ vectors we are looking for are easily obtained:
	\begin{align}
		\mathcal S':=
		\{
			\lambda_{\bm e_k}
			(
			\bm q_{f(k)}
			)
		\}_{k\in[d]}
		=
		\{
			(
			\bm q_{f(k)}
			\oplus
			\bm e_k
			,
			\bm e_k
			)
		\}_{k\in[d]}
		\,.
	\end{align}
	These vectors saturate the lifted inequality due to Eq.~\eqref{eq:imp}, and are affinely independent (also with respect to~$\mathcal S$), because each of the vectors contributes to an otherwise untouched dimension.

	What remains to be shown is that such a function~$f$ exists for~$\mathcal T$.
	Towards a contradiction, suppose no such~$f$ exists.
	This means that there exists a dimension~$\ell$ such that~\mbox{$\forall \bm t\in\mathcal T: t_\ell = 1$}.
	Without loss of generality, we take~$\ell$ to be the first dimension, {\it i.e.,}~$\ell=0$, and we have~$\forall \bm v\in\aff(\mathcal T): v_0=1$.
	Since~$\aff(\mathcal T)$ is~$(d-1)$-dimensional and the first dimension is fixed to take value~$1$, the vector~$(1,\bm 0)$ and every vector from~$\{(1,\bm e_k)\}_{k\in[d-1]}$ is in~$\aff(\mathcal T)$.
	Moreover, note that~$(\bm w,c)$ is saturated by any affine combination of the vectors in~$\mathcal T$.
	From this we get that for all~$k\in[d-1]$:
	\begin{align}
		c
		=
		\bm w\cdot (1, \bm 0)
		=
		\bm w\cdot (1, \bm e_k)
		\,,
	\end{align}
	and the inequality~$(\bm w,c)$ must be trivial.
\end{proof}

\section{Bell games}
Consider the following general scenario.
For~$n$-parties~$[n]$, each party~$k\in[n]$ receives an input~$x_k\in\mathcal X_k$, for some input space~$\mathcal X_k$, and {\em immediately\/} produces an output~$a_k\in\mathcal A_k$, for some output space~$\mathcal A_k$.
The realization of~$a_k$ is the physical event~$E_k$.
An input-output behavior of these parties is described by the conditional probability distribution~$p(\underline a| \underline x)$.
We assume that the causal order of the parties is static, {\it i.e.,} the events~$\{E_k\}_{k\in[n]}$ form a partial order~$\preceq_\sigma$.
This means that if~$j\not\preceq_\sigma i$, then the output~$a_i$ is {\em independent\/} of~$x_j$ and~$a_j$.
In general, the partial order among the events may depend on some initial randomness, {\it e.g.,} a coin flip.
This leads to the following definition.\footnote{In fact, this definition is equivalent in asking~$p(\underline a|\underline x)$ to decompose with total orders instead of partial orders.}
\begin{definition}[Partial-order correlations]
	\label{def:pocorr}
	The~$n$-party correlations~$p(\underline a|\underline x)$ are {\em partial-order correlations\/} if and only if
	there exists a probability distribution~$p(\sigma)$ over partial orders~$\preceq_\sigma$ and a family of conditional probability distributions~$\{p_{k,\sigma}(a_k|a_{\prec_\sigma k}, x_{\preceq_\sigma k})\}_{k,\sigma}$, such that
	\begin{align}
		p(\underline a| \underline x)
		=
		\sum_{\sigma}
		p(\sigma)
		\prod_{k\in[n]}
		p_{k,\sigma}(a_k|a_{\prec_\sigma k}, x_{\preceq_\sigma k})
		\,.
	\end{align}
\end{definition}

\subsection{Single-output scenario}
We simplify the general scenario to the minimal case of a single bit of output.
The party who produces this single bit of output, however, is selected by the input.
\begin{definition}[Single-output scenario]
	\label{def:sos}
	For~\mbox{$n\geq 2$} parties~$[n]$, the input space of party~$k\in[n]$ is~\mbox{$\mathcal X_k:=[n]^2_{\neq} \times\mathcal Z_k$},
	where the space~$\mathcal Z_k$ depends on the input to the~$[n]^2_{\neq}$ part.
	Similarly, the output space~$\mathcal A_k$ depends on the input to the~$[n]^2_{\neq}$ part.
	For~$(s,r)\in[n]^2_{\neq}$,
	\begin{align}
		\mathcal Z_k = 
		\begin{cases}
			[2] & \text{if }k=s,\\
			\emptyset & \text{otherwise,}
		\end{cases}
		\quad
		\mathcal A_k = 
		\begin{cases}
			[2] & \text{if }k=r,\\
			\emptyset & \text{otherwise.}
		\end{cases}
	\end{align}
	The input~$(s,r)\in[n]^2_{\neq}$ is {\em shared\/} with all parties.
\end{definition}
Note that the above definition could also be formulated with fixed input spaces and a special symbol to denote ``no input.''
Also, one could define the output space to be the binary space~$[2]$ for {\em all\/} parties, and then marginalize over the non-relevant outputs.
Such a definition, however, would be redundant.
Also, note that the {\em shared input\/} space~$[n]^2_{\neq}$ has an intuitive role:
If the shared input is~$(s,r)$, then party~$s$ has the additional binary input space and party~$r$ has the binary output space; party~$s$ is the ``sender'' and party~$r$ is the ``receiver.''
It is important to note, however, that the parties without output are thought of producing a trivial output.
This renders the above-mentioned events~$\{E_k\}_{k\in[n]}$  well-defined.

In this single-output scenario, the input-output behavior is some conditional probability distribution~$p(a|s,r,x)$ where the pair~$(s,r)$ is an input to every party,~$x$ is the binary input to party~$s$, and~$a$ is the binary output of party~$r$.
Such correlations~$p(a|s,r,x)$, according to Definition~\ref{def:pocorr}, are partial-order correlations whenever there exists a distribution~$p(\sigma)$ over all partial orders, and a family of conditional probability distributions~$\{p_{\sigma}^{\preceq}(a|s,r,x),p_{\sigma}^{\not\preceq}(a|s,r)\}_{\sigma}$, such that they decompose as
\begin{align}
	\begin{split}
		p(a | s,r,x)
		&=
		\sum_{\sigma: s\preceq_\sigma r} p(\sigma) p_{\sigma}^{\preceq}(a | s,r,x)
		\\+
		&\quad\sum_{\sigma: s\not\preceq_\sigma r} p(\sigma) p_{\sigma}^{\not\preceq}(a | s,r)
		\,;
	\end{split}
	\label{eq:correlations}
\end{align}
the output~$a$ of the ``receiver''~$r$ may depend on the bit~$x$ only if the ``sender''~$s$ is before~$r$.
\begin{definition}[Single-output partial-order correlations]
	For~$n\geq 2$ parties, the set~$\mathcal C_n$ is the set of all conditional probability distributions~$p(a|s,r,x)$ that decomposed as in Eq.~\eqref{eq:correlations}.
\end{definition}

\subsection{Geometric representation}
In order to derive the limits on partial-order correlations, we represent the attainable correlations geometrically.
The facet-defining inequalities of the resulting polytope correspond to the respective Bell games (see Figure~\ref{fig:polytope}).
\begin{figure}
	\centering
	\begin{tikzpicture}
		\pgfmathsetmacro{\delta}{0.5}
		\pgfmathsetmacro{\scale}{1.75}
		\path[draw,fill=black!20!white] (0,0) -- (0,1*\scale) -- (0.5*\scale,1.25*\scale) -- (1.25*\scale,0.5*\scale) -- (1*\scale,0) -- cycle;
		\draw[red,thick] (0.5*\scale-\delta*\scale,1.25*\scale+\delta*\scale) -- (1.25*\scale+\delta*\scale,0.5*\scale-\delta*\scale);
		\node[point,label={below right:$\bm p$}] (p) at (1.25*\scale,1*\scale) {};
	\end{tikzpicture}
	\caption{Schematic of a polytope with a facet-defining inequality.
	The shaded region represents the polytope~$\mathcal P_n$ of partial-order correlations. The line indicates a~facet-defining inequality.
	Every point left to or on that line satisfies the inequality.
	In contrast, point~$\bm p$ {\em violates\/} the inequality.
	This violation proves~$\bm p\not\in\mathcal P_n$.}
	\label{fig:polytope}
\end{figure}
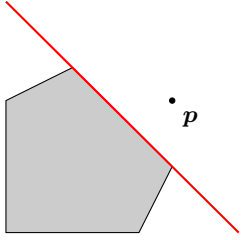
The geometric representation is obtained by expressing the conditional probability distributions from the set~$\mathcal C_n$ as vectors.
Thanks to total probability, the characteristic vector of some~$p\in\mathcal C_n$ is the~$2n(n-1)$-dimensional vector of the probability that~$a=0$ for each input, {\it i.e.,} each~$p\in\mathcal C_n$ is represented by~$\bm \chi(p)$ with
\begin{align}
	\bm\chi: \mathcal C_n&\rightarrow \mathbb R^{2n(n-1)}\\
	p&\mapsto 
	(p(0|s,r,x))_{(s,r,x)\in[n]^2_{\neq}\times[2]}
	\,.
\end{align}
We state and show some immediate facts.
\begin{lemma}[Partial-order-correlations polytope~$\mathcal P_n$]
	\label{lemma:geometricrepresentation}
	For~$n\geq 2$ parties, the set~\mbox{$\mathcal P_n:=\bm\chi(\mathcal C_n)\subseteq\mathbb R^{2n(n-1)}$} is a {\em full-dimensional 0/1 polytope.}
	We call this set the~{\em $n$-party partial-order-correlations polytope.}
\end{lemma}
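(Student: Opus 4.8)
The plan is to identify a finite family of \emph{deterministic strategies} whose characteristic vectors generate $\mathcal P_n$, to argue that $\mathcal C_n$ is exactly their convex hull, and then to exhibit enough of these vectors to span the ambient space.

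First I would describe the deterministic strategies. Fix a total order $\sigma$ on $[n]$; for each pair $(s,r)\in[n]^2_{\neq}$ with $s\preceq_\sigma r$ fix a function $g_{s,r}\colon[2]\to[2]$, and for each pair with $s\not\preceq_\sigma r$ fix a constant $b_{s,r}\in[2]$. The correlation that outputs $a=g_{s,r}(x)$ in the first case and $a=b_{s,r}$ in the second decomposes as in Eq.~\eqref{eq:correlations} with $p(\sigma')=\delta_{\sigma,\sigma'}$ and deterministic $p^{\preceq}_{r,\sigma}$, $p^{\not\preceq}_{r,\sigma}$, so it lies in $\mathcal C_n$; its characteristic vector has all entries in $\{0,1\}$ because each entry equals $[g_{s,r}(x)=0]$ or $[b_{s,r}=0]$. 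There are only finitely many such strategies.

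Next I would show $\mathcal C_n=\conv(\text{deterministic strategies})$. The inclusion $\supseteq$ is immediate from convexity of $\mathcal C_n$: a convex combination of decompositions as in Eq.~\eqref{eq:correlations} is again such a decomposition, mixing the weights $p(\sigma)$. For $\subseteq$, take $p\in\mathcal C_n$ with its decomposition over partial orders; it suffices to refine each $\sigma$-term. Replacing $\sigma$ by any total order extending it only enlarges each index set $\{s\mid s\preceq_\sigma r\}$ and hence preserves the property that the conditional built from $p^{\preceq}_{r,\sigma}$ and $p^{\not\preceq}_{r,\sigma}$ is independent of $x$ whenever $s\not\preceq r$, so we may assume $\sigma$ is total. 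For a fixed total $\sigma$, that conditional ranges over a polytope which is a product of probability simplices on $\{0,1\}$—one factor per pair $(s,r)$ with $s\not\preceq_\sigma r$, and one per pair $(s,r)$ with $s\preceq_\sigma r$ and per value of $x$—whose vertices are exactly the deterministic $\sigma$-strategies above; hence it is a convex combination of them, and mixing over $\sigma$ with the weights $p(\sigma)$ writes $p$ as a convex combination of deterministic strategies. Since $\bm\chi$ is affine, $\mathcal P_n=\bm\chi(\mathcal C_n)=\conv\bigl(\bm\chi(\text{deterministic strategies})\bigr)$, a convex hull of finitely many points of $\{0,1\}^{2n(n-1)}$, whence $\ext(\mathcal P_n)\subseteq\{0,1\}^{2n(n-1)}$ and $\mathcal P_n$ is a 0/1 polytope.

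For full-dimensionality I would exhibit $2n(n-1)+1$ affinely independent points of $\mathcal P_n$: the vector $\bm 0$ and all unit vectors $\bm e_{(s,r,x)}$. The strategy with $g_{s,r}\equiv 1$ and $b_{s,r}=1$ for every pair (with any $\sigma$) has characteristic vector $\bm 0$. For a fixed triple $(s^\ast,r^\ast,x^\ast)$, choose a total order $\sigma$ with $s^\ast\preceq_\sigma r^\ast$, set $g_{s^\ast,r^\ast}(x^\ast)=0$ and $g_{s^\ast,r^\ast}(1\oplus x^\ast)=1$, and let every other receiver output $1$ always; its characteristic vector is $\bm e_{(s^\ast,r^\ast,x^\ast)}$. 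These $2n(n-1)+1$ vectors are affinely independent, so $\dim(\mathcal P_n)=2n(n-1)$. I expect the only delicate point to be the refinement step: one must verify that, with the partial order held fixed, the receiver's responses across different shared inputs $(s,r)$ and across the two values of $x$ (when $s\preceq_\sigma r$) are mutually unconstrained, which is precisely what makes the $\sigma$-strategy set a genuine product of simplices and licenses the vertex decomposition; everything else is routine.
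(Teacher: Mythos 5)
Your proposal is correct and takes essentially the same route as the paper: the paper's proof likewise reduces to the facts that $\bm\chi$ is linear, that every distribution satisfying Eq.~\eqref{eq:correlations} is a convex combination of deterministic (0/1) strategies, and that $\mathcal P_n$ contains the zero vector and all unit vectors of $\mathbb R^{2n(n-1)}$, hence is full-dimensional. Your write-up merely supplies the details (refinement of partial orders to total orders, the product-of-simplices structure for fixed $\sigma$) that the paper leaves implicit, and these details are sound.
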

\begin{proof}
	That~$\mathcal P_n$ is a {\em convex 0/1 polytope\/} follows from the fact that~$\bm\chi$ is linear,~$\mathcal C_n$ is convex, and
	any conditional probability distribution that satisfies Eq.~\eqref{eq:correlations} can be written as a convex combination of deterministic ones.
	Finally, that~$\mathcal P_n$ is {\em full dimensional\/} follows from observing that~$\mathcal P_n$ contains the zero vector and all unit vectors in~$\mathbb R^{2n(n-1)}$.
\end{proof}

Since the parties may arbitrarily relabel the output, the polytope~$\mathcal P_n$ of interest is moreover pairwise centrally symmetric.
\begin{lemma}[Pairwise central symmetry]
	\label{lemma:pcs}
	For~$n\geq 2$, the polytope~$\mathcal P_n$ is pairwise centrally symmetric.
\end{lemma}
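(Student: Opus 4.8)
The plan is to realize the required symmetry (Eq.~\eqref{eq:pcs}) by a family of coordinate-complementing affine involutions coming from relabelings of the receiver's output. The pairing of dimensions is the obvious one: fix any enumeration of the index set $[n]^2_{\neq}$, let $\bm p_0$ collect the coordinates indexed by $(s,r,0)$ and $\bm p_1$ the coordinates indexed by $(s,r,1)$, both listed in that common order. Then $d=n(n-1)$, the $i$-th paired dimension corresponds to the $i$-th pair $(s,r)$, and flipping $\bm e_i$ in both $\bm p_0$ and $\bm p_1$ amounts to complementing exactly the two ambient coordinates $(s,r,0)$ and $(s,r,1)$. Since $\mathcal P_n$ is a $0/1$ polytope (Lemma~\ref{lemma:geometricrepresentation}), hence $\mathcal P_n\subseteq[0,1]^{2n(n-1)}$, its extremal vectors are precisely the $0/1$ points it contains (a vertex of the cube is extremal in any subpolytope of the cube containing it, and conversely). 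So it suffices to show: for every $(s,r)\in[n]^2_{\neq}$, the affine involution $T_{s,r}$ of $\mathbb R^{2n(n-1)}$ that replaces the coordinates $(s,r,0)$ and $(s,r,1)$ by their complements $1-(\cdot)$ and fixes all others maps $\mathcal P_n$ into itself.

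For this, I would take an arbitrary $p\in\mathcal C_n$ and define $p'$ by $p'(a\mid s',r',x):=p(a\mid s',r',x)$ whenever $(s',r')\neq(s,r)$, and $p'(a\mid s,r,x):=p(a\oplus 1\mid s,r,x)$; this is the relabeling of the receiver's output bit for the single shared input $(s,r)$. Fixing any decomposition of $p$ as in Eq.~\eqref{eq:correlations}, with the distribution $p(\sigma)$ over partial orders and response functions $p^{\preceq}_{r,\sigma},p^{\not\preceq}_{r,\sigma}$, the correlation $p'$ decomposes with the \emph{same} $p(\sigma)$ and with the receiver's response functions for the shared input $(s,r)$ post-composed with the bit flip $a\mapsto a\oplus 1$ (all other response functions untouched). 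These are again legitimate conditional probability distributions, and they still depend on $x$ only in the branch $s\preceq_\sigma r$, so $p'\in\mathcal C_n$. By construction $\bm\chi(p')=T_{s,r}(\bm\chi(p))$, since the $(s,r,0)$- and $(s,r,1)$-entries of $\bm\chi(p')$ are $p'(0\mid s,r,x)=p(1\mid s,r,x)=1-p(0\mid s,r,x)$. Hence $T_{s,r}(\mathcal P_n)\subseteq\mathcal P_n$.

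Being an involution, $T_{s,r}$ is onto, so $T_{s,r}(\mathcal P_n)=\mathcal P_n$ and $T_{s,r}$ restricts to an affine automorphism of $\mathcal P_n$; therefore it permutes $\ext(\mathcal P_n)$. On a $0/1$ vector, this automorphism is exactly the map $(\bm p_0,\bm p_1)\mapsto(\bm p_0\oplus\bm e_i,\bm p_1\oplus\bm e_i)$ for the index $i$ attached to $(s,r)$. Letting $(s,r)$ range over all of $[n]^2_{\neq}$ covers every $i\in[d]$, which is precisely the implication in Eq.~\eqref{eq:pcs} (with the fixed dimension permutation described above), so $\mathcal P_n$ is pairwise centrally symmetric. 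The only point requiring a little care is that the relabeled correlation $p'$ still obeys a partial-order decomposition; but this is immediate once one observes that the decomposition is ``coordinatewise'' in the shared input, so composing the receiver's response with an output relabeling cannot disturb the signaling constraint $s\preceq_\sigma r$. I do not anticipate a genuine obstacle here — the argument is essentially bookkeeping about output relabelings, mirroring the remark that $\mathcal P_n$ is pairwise centrally symmetric ``because the parties may relabel the output bit.''
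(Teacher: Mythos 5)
Your proposal is correct and follows essentially the same route as the paper's proof: the pairwise central symmetry is realized by letting the receiver $r$ relabel its output bit for a fixed shared input $(s,r)$, which complements exactly the two coordinates $(s,r,0)$ and $(s,r,1)$ of the characteristic vector while preserving membership in $\mathcal C_n$. The paper states this in two sentences; you have merely filled in the bookkeeping (that the relabeled correlation inherits the decomposition of Eq.~\eqref{eq:correlations} with the same $p(\sigma)$, and that the resulting affine involution permutes $\ext(\mathcal P_n)$), all of which is sound.
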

\begin{proof}
	Let~$\bm p=(p_{s,r,x})_{s,r,x}\in\ext(\mathcal P_n)$, and partition the vector~$\bm p$ into the two~$n(n-1)$ dimensional vectors~\mbox{$\bm p_b=(p_{s,r,x=b})_{s,r}$} by restricting the input~$x$ to zero or one.
	If for some arbitrary~$(\hat s,\hat r)\in[n]_{\neq}^2$ party~$\hat r$ flips the output bit~$a$, then the resulting vector is as~$\bm p$ but where the entries at the coordinates~$(\hat r,\hat s,0),(\hat r,\hat s,1)$ are flipped.
\end{proof}

\subsection{Projection of~$\mathcal P_n$}
Since the polytope~$\mathcal P_n$ is pairwise centrally symmetric, we know from Theorem \ref{thm:liftingfacets} that we can find its facet-defining inequalities by computing the facets of the projected polytope (see Definition~\ref{def:projection})
\begin{align}
	\mathcal Q_n:=\conv(\pi_{n(n-1)}(\ext(\mathcal P_n)))
	\,.
\end{align}
As it turns out, the extremal vectors of~$\mathcal Q_n$ have a simple form.
This allows us to refer to~$\mathcal Q_n$ as the {\em DAG polytope.}
\begin{lemma}[DAG]
	\label{lemma:dag}
	For~$n\geq 2$, the set of extremal vectors of~$\mathcal Q_n$ equals the set of the adjacency vectors of all DAGs of order~$n$, {\it i.e.,}~$\ext(\mathcal Q_n)=\bm\alpha(\DAG_n)$.
\end{lemma}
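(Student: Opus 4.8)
\textit{Proof proposal.} The plan is to prove the two inclusions $\ext(\mathcal Q_n)\subseteq\bm\alpha(\DAG_n)$ and $\bm\alpha(\DAG_n)\subseteq\ext(\mathcal Q_n)$, after a preliminary reduction. Since $\ext(\mathcal P_n)\subseteq\{0,1\}^{2n(n-1)}$ (Lemma~\ref{lemma:geometricrepresentation}) and $\pi_{n(n-1)}$ maps $\{0,1\}^{2n(n-1)}$ into $\{0,1\}^{n(n-1)}$, the polytope $\mathcal Q_n=\conv(\pi_{n(n-1)}(\ext(\mathcal P_n)))$ is the convex hull of a finite set of $0/1$ vectors; because every $0/1$ vector in a $0/1$ polytope is a vertex (in each coordinate, all $0/1$ vectors appearing in a convex decomposition of it must agree with it), we get $\ext(\mathcal Q_n)=\pi_{n(n-1)}(\ext(\mathcal P_n))$. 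So it suffices to identify this image.

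For the inclusion into $\bm\alpha(\DAG_n)$, I would use that, by the proof of Lemma~\ref{lemma:geometricrepresentation}, each vertex of $\mathcal P_n$ is $\bm\chi(p)$ for a \emph{deterministic} $p\in\mathcal C_n$, hence for a single (total) order $\sigma$ together with deterministic response functions, so that on shared input $(s,r)$ the receiver $r$'s output may depend on $x$ only if $s\prec_\sigma r$. The key observation is that the $(s,r)$-entry of $\pi_{n(n-1)}(\bm\chi(p))$ equals $p(0|s,r,0)\oplus p(0|s,r,1)$, which is $1$ exactly when $r$'s output genuinely depends on $x$. Thus $\pi_{n(n-1)}(\bm\chi(p))=\bm\alpha(D)$, where $D$ is the digraph whose arcs are precisely the pairs $(s,r)$ on which $r$ signals. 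Its arc set is contained in $\{(s,r)\mid s\prec_\sigma r\}$, which is acyclic since $\prec_\sigma$ is a strict partial order, so any subset of it is acyclic; hence $D\in\DAG_n$.

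For the reverse inclusion, given $D\in\DAG_n$ I would topologically sort its vertices into a total order $\sigma$ in which every arc of $D$ points forward, and then build the deterministic strategy with order $\sigma$ in which, on shared input $(s,r)$, the receiver outputs $a=x$ whenever $(s,r)$ is an arc of $D$ (legitimate, since then $s\prec_\sigma r$) and outputs the constant $a=0$ otherwise (which is $x$-independent, as it must be when $s\not\preceq_\sigma r$). This $p$ lies in $\mathcal C_n$, so $\bm\chi(p)\in\mathcal P_n$ is a $0/1$ vector and hence a vertex, and by construction $\pi_{n(n-1)}(\bm\chi(p))=\bm\alpha(D)$. Combining the two inclusions gives $\ext(\mathcal Q_n)=\bm\alpha(\DAG_n)$.

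I do not expect a serious obstacle: the only points needing care are the reduction $\ext(\mathcal Q_n)=\pi_{n(n-1)}(\ext(\mathcal P_n))$ and the fact that every vertex of $\mathcal P_n$ admits a single-order deterministic decomposition, both of which rest on the $0/1$ structure already established. The genuine content is merely that the signaling pattern of a static (partial-order) deterministic strategy is a subdigraph of a strict partial order, i.e.\ a DAG, and, conversely, that topological sorting realizes every DAG in exactly this way.
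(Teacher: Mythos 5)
Your proposal is correct and follows essentially the same route as the paper: both inclusions are proved by reading off the signaling pattern of a deterministic partial-order strategy (necessarily a subdigraph of a strict partial order, hence a DAG) and, conversely, by realizing any DAG via a compatible order with the ``output $x$ on arcs, $0$ otherwise'' strategy. Your explicit justification that $\ext(\mathcal Q_n)=\pi_{n(n-1)}(\ext(\mathcal P_n))$ via the $0/1$ structure is a detail the paper takes for granted, but it does not change the argument.
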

\begin{proof}
	Both sets consist of~$(n(n-1))$-dimensional vectors.
	For the~$\subseteq$ inclusion, let~\mbox{$\bm q=(q_{s,r})_{s,r}:=\pi_{n(n-1)}(\bm p)$} for some extremal vector~\mbox{$\bm p=(\bm p_0,\bm p_1)\in\ext(\mathcal P_n)$}.
	The vectors~$\bm p_{b}$ for~\mbox{$b\in\{0,1\}$ are~$(p_{s,r,x=b})_{(s,r)}$}.
	The entry~\mbox{$p_{s,r,x}\in\{0,1\}$} is the probability that party~$r$ outputs~$a=0$ on input~$(s,r)$ to all parties, and on the additional input~$x$ to party~$s$.
	By the projection~$\pi_{n(n-1)}$, the entry~$q_{s,r}$ is~$p_{s,r,0}\oplus p_{s,r,1}$;
	it is zero if the output~$a$ is {\em independent\/} of~$x$, and~$1$ otherwise.
	Therefore,~$\bm q$ is the adjacency vector of the digraph~\mbox{$D=([n],\mathcal A)$}, where~$(s,r)\in\mathcal A$ if and only if the output~$a$ of the ``receiver''~$r$ depends on the input~$x$ of the ``sender''~$s$.
	Since~$\bm p$ is the vector representation~$\bm\chi(p)$ of the partial-order correlations~$p\in\mathcal C_n$, the~$n$ parties that establish these correlations form a partial order.
	Now, for any partial order, it is {\em impossible\/} that the parties influence each other in a cyclic way: 
	The graph~$D$ is a DAG.

	For the reverse direction~$\supseteq$, let~$D=([n],\mathcal A)$ be a DAG.
	In the following, we construct partial-order correlations~$p\in\mathcal C_n$, such that the corresponding vector~$\bm q$ is~$\bm \alpha(D)$.
	First, we find a partial order~$\preceq_\sigma$, such that~$D$ is compatible with~$\preceq_\sigma$ in the following sense:
	if~$(u,v)\in\mathcal A$, then~$u\preceq_\sigma v$.
	Such a partial order can, for instance, be obtained from the transitive closure of the arcs.
	Now, the strategies of the parties are:
	On input~$(u,v)$, the ``receiver''~$v$ outputs~$a=0$ whenever~$(u,v)\not\in\mathcal A$, and~$a=x$ otherwise.
	In the former case, the ``receiver''~$v$ outputs~$a=0$ {\em even if\/}~$u\preceq_\sigma v$---the receiver simply {\em ignores\/} any potential information on~$x$.
	In the latter case, the arc~$(u,v)$ implies that the ``sender''~$u$ is in the causal past of~$v$, and therefore,~$u$ might communicate the value of~$x$ to~$v$.
	These strategies yield the vector~$\bm p=(p_{u,v,x})_{u,v,x}$ with
	\begin{align}
		p_{u,v,x} =
		\begin{cases}
			1 & (u,v)\not\in\mathcal A\,,\\
			1 & (u,v)\in\mathcal A \wedge x=0\,,\\
			0 & (u,v)\in\mathcal A \wedge x=1\,.
		\end{cases}
	\end{align}
	Therefore, the parity~$p_{u,v,0}\oplus p_{u,v,1}$ is one if~$(u,v)\in\mathcal A$, and zero otherwise:~$\bm q=\pi_{n(n-1)}(\bm p)=\bm \alpha(D)$.
\end{proof}

\subsection{The acyclic-subdigraph polytope}
The (unweighted) acyclic-subdigraph problem in combinatorial optimization is the following:
Given a digraph~$D=(\mathcal V,\mathcal A)$, find a DAG subdigraph~$D'$ of~$D$ which maximizes the number of arcs.
The dual of this problem is known as the minimum-feedback-arc-set problem.
Algorithms to these problems are of practical relevance in {\it e.g.,} voting, ranking, and task scheduling~\cite[p.~57--61]{junger1985}.
One can associate to every instance of such a problem a polytope with the solutions as extremal points.
The representation of the polytope in terms of linear inequalities allows for the application of linear-programming techniques.
The polytope for the acyclic-subdigraph problem is as follows:
\begin{definition}[Acyclic-subdigraph polytope~\cite{acyclic1985}]
	Given a digraph~$D=(\mathcal V,\mathcal A)$, the acyclic-subdigraph polytope is
	\begin{align}
		\AC(D)
		:=
		\conv(
		\{\bm \alpha(D') \mid \DAG \ni D'\subseteq D \}
		)
		\,.
	\end{align}
\end{definition}
Thus, the polytope~$\mathcal Q_n$ of our interest is~$\AC(K_n^\text{di})$, and we have
\begin{align}
	\ext(\mathcal Q_n)
	=
	\ext(\AC(K_n^\text{di}))
	=
	\bm\alpha(\DAG_n)
	\,.
\end{align}
Gr\"otschel, J\"unger, and Reinelt~\cite{acyclic1985} derive a series of facet-defining inequalities for~$\AC(D)$, and hence, also for~$\mathcal Q_n$. 
Before we restate these inequalities, we define two classes of digraphs.
Note that the graphs in both classes are bipartite.
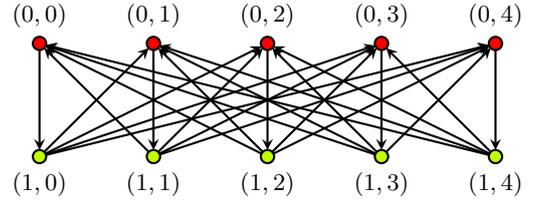
\begin{figure}
	\centering
	\begin{tikzpicture}[every path/.style={-stealth,thick}]
		\def\k{5}
		\def\xscale{1.5}
		\def\yscale{1.5}
		\foreach \i in {1,2,...,\k} {
			\pgfmathtruncatemacro{\id}{\i-1}
			\node[color1,label={above:$(0,\id)$}] (v\i) at (\i*\xscale,\yscale) {};
			\node[color2,label={below:$(1,\id)$}] (w\i) at (\i*\xscale,0) {};
		}
		\foreach \i in {1,2,...,\k} {
			\draw (v\i) -- (w\i);
		}
		\foreach \i in {2,3,...,\k} {
			\draw (w1) -- (v\i);
		}
		\foreach \i in {1,3,4,5} {
			\draw (w2) -- (v\i);
		}
		\foreach \i in {1,2,4,5} {
			\draw (w3) -- (v\i);
		}
		\foreach \i in {1,2,3,5} {
			\draw (w4) -- (v\i);
		}
		\foreach \i in {1,2,3,4} {
			\draw (w5) -- (v\i);
		}
	\end{tikzpicture}
	\caption{The~$5$-fence digraph with its two-coloring.}
	\label{fig:fence}
\end{figure}
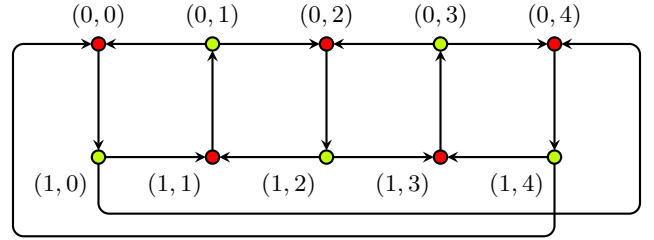
\begin{figure}
	\centering
	\begin{tikzpicture}[every path/.style={-stealth,thick}]
		\def\k{5}
		\def\xscale{1.5}
		\def\yscale{1.5}
		\pgfmathsetmacro{\tang}{90}
		\pgfmathsetmacro{\bang}{260}
		\foreach \i in {1,3,...,\k} {
			\pgfmathtruncatemacro{\id}{\i-1}
			\node[color1,label={\tang:$(0,\id)$}] (v\i) at (\id*\xscale,\yscale) {};
			\node[color2,label={\bang:$(1,\id)$}] (w\i) at (\id*\xscale,0) {};
		}
		\foreach \i in {2,4,...,\k} {
			\pgfmathtruncatemacro{\id}{\i-1}
			\node[color2,label={\tang:$(0,\id)$}] (v\i) at (\id*\xscale,\yscale) {};
			\node[color1,label={\bang:$(1,\id)$}] (w\i) at (\id*\xscale,0) {};
		}
		\foreach \i in {1,3,...,\k} {
			\draw (v\i) -- (w\i);
		}
		\foreach \i in {2,4,...,\k} {
			\draw (w\i) -- (v\i);
		}
		\draw (w1) -- (w2);
		\draw (w3) -- (w4);
		\draw (v2) -- (v1);
		\draw (v4) -- (v3);
		\draw (v2) -- (v3);
		\draw (w3) -- (w2);
		\draw (v4) -- (v5);
		\draw (w5) -- (w4);
		\pgfmathtruncatemacro{\km}{\k-1}
		\draw[rounded corners] (w1) |- ++(\km*\xscale+0.75*\xscale,-0.5*\yscale) |- (v5);
		\draw[rounded corners] (w5) |- ++(-\km*\xscale-0.75*\xscale,-0.7*\yscale) |- (v1);
	\end{tikzpicture}
	\caption{The~$5$-M\"obius digraph with its two-coloring.}
	\label{fig:moebiusX}
\end{figure}
\begin{definition}[$k$-fence and~$k$-M\"obius digraph]
	The {\em~$k$-fence digraph\/} for~$k\geq 2$ (see Figure~\ref{fig:fence}) is the orientation of~$K_{k,k}$ where~$\{(0,i),(1,i)\}$ is oriented as~$((0,i),(1,i))$, and all remaining edges are oriented as~$((1,i),(0,j))$, {\it i.e.,} it has the vertex set~$\{0,1\}\times[k]$, and the arcs
	\begin{align}
		&
		\Big\{
			(
			(0,i)
			,
			(1,i)
			)
			,
			(
			(1,i)
			,
			(0,j)
			)
			\,
			\Big|
			\,
			(i,j)\in[k]^2_{\neq}
		\Big\}
		\,.
	\end{align}

	The {\em~$k$-M\"obius digraph\/} (see Figure~\ref{fig:moebiusX}) is defined for odd~$k\geq 3$ only, and it is the orientation of the~$k$-ladder~\mbox{$L_2\times L_k$} such that~$\{(0,0),(1,0)\}$ is oriented as~$((0,0),(1,0))$, the arcs adjacent to every internal face form a four-cycle,
	and has the ``crossing'' arcs~$\{((1,0),(0,k-1)),((1,k-1),(0,0))\}$ in addition, {\it i.e.,} it has the vertex set~$\{0,1\}\times[k]$ and the arcs
	\begin{align}
		\begin{split}
			&
			\Big\{
				\big(
				(x,i+\gamma y)
				,
				(y, i+\gamma(x\oplus 1))
				\big)
				\,
				\Big|
				\,
				\\
				&\quad
				i\in[k],
				i\text{ odd},
				x,y\in\{0,1\},
				\gamma\in\{\pm 1\}
			\Big\}
			\;
			\cup
			\\
			&
			\Big\{
				((1,0),(0,k-1)),((1,k-1),(0,0))
			\Big\}
			\,.
		\end{split}
	\end{align}
\end{definition}
Note that the~$2$-fence digraph is isomorphic to the~$4$-cycle digraph, and that the~$3$-fence digraph is isomorphic to the~$3$-M\"obius digraph.
Also, the~$k$-M\"obius digraph is an orientation of the~$2k$-M\"obius ladder defined in Ref.~\cite{guy1967}.

For the following theorem, we need the notion of {\em trivial embedding.}
Let~$D=(\mathcal V\subseteq[n],\mathcal A)$ be some digraph.
We can always embed~$D$ in a digraph~$D^{\uparrow n}:=([n],\mathcal A)$ of order~$n$ by extending the set of vertices from~$\mathcal V$ to~$[n]$, and by keeping the set of arcs~$\mathcal A$.
\begin{theorem}[Facet-defining inequalities~\cite{acyclic1985}]
	\label{thm:acyclic1985}
	For~$n\geq 2$, let~$D=(\mathcal V\subseteq [n],\mathcal A)$ be some digraph.

	If~$D$ is isomorphic to the~$k$-cycle digraph~$C_k$, then the {\em~$k$-cycle inequality\/}~$(\bm \alpha(D^{\uparrow n}),k-1)$ is facet-defining for~$\mathcal Q_n$.

	If~$D$ is isomorphic to the~$k$-fence digraph, then the {\em~$k$-fence inequality\/}~$(\bm \alpha(D^{\uparrow n}),k^2-k+1)$ is facet-defining for~$\mathcal Q_n$.

	If~$D$ is isomorphic to the~$k$-M\"obius digraph, then the {\em~$k$-M\"obus inequality\/}~$(\bm \alpha(D^{\uparrow n}),(5k-1)/2)$ is facet-defining for~$\mathcal Q_n$.
\end{theorem}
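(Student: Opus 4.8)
The plan is to prove all three families by the \emph{direct method} for facets of a full-dimensional polytope. Recall that $\mathcal Q_n\subseteq\mathbb R^{n(n-1)}$ is a full-dimensional $0/1$ polytope (it contains $\bm 0$ and every unit vector) with $\ext(\mathcal Q_n)=\bm\alpha(\DAG_n)$ by Lemma~\ref{lemma:dag}. For a pattern $D=(\mathcal V\subseteq[n],\mathcal A)$ the coefficient vector $\bm\alpha(D^{\uparrow n})$ is the indicator of $\mathcal A$, so for every DAG $D'$ the value $\bm\alpha(D^{\uparrow n})\cdot\bm\alpha(D')$ counts the arcs of $D$ contained in $D'$, and these form an acyclic subdigraph of $D$. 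Hence the inequality is \emph{valid} with right-hand side $c$ if and only if $c$ equals the maximum number of arcs in an acyclic subdigraph of $D$. For the $k$-cycle this maximum is $k-1$: no acyclic subdigraph contains all $k$ arcs, and deleting exactly one arc leaves an acyclic path.

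For the $k$-fence, classify an acyclic subdigraph by the set $S\subseteq[k]$ of retained ``vertical'' arcs $((0,i),(1,i))$. A directed cycle alternates levels, using a vertical arc from level~$0$ to level~$1$ and a ``slanted'' arc $((1,i),(0,j))$ from level~$1$ to level~$0$; hence a slanted arc $((1,i),(0,j))$ can lie on a cycle only if $i,j\in S$, and the family of retained slanted arcs with both indices in $S$ must itself be acyclic as a digraph on $S$, so it has at most $\binom{|S|}{2}$ elements, while all $k(k-1)-|S|(|S|-1)$ slanted arcs with an index outside $S$ are harmless. Maximising $|S|+\bigl(k(k-1)-|S|(|S|-1)\bigr)+\binom{|S|}{2}$ over $|S|\in\{0,\dots,k\}$ yields $k^2-k+1$, attained at $|S|\in\{1,2\}$; the optimum is realised, e.g., by keeping one vertical arc and all slanted arcs, which also furnishes explicit tight DAGs. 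For the $k$-M\"obius the analogous quantity is $3k$ (the number of arcs) minus the size of a minimum feedback arc set; the $k-1$ directed four-cycles bounding the internal faces of the ladder, together with the longer directed cycles created by the two crossing arcs, admit a covering by $(k+1)/2$ arcs and, by a fractional-covering (LP-duality) argument on the cyclically arranged squares, by no fewer, so the maximum acyclic subdigraph has $3k-(k+1)/2=(5k-1)/2$ arcs.

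It remains, in each case, to show that any equation $\bm a\cdot\bm x=a_0$ satisfied by all DAGs on the face $\mathcal F:=\{D'\in\DAG_n : \bm\alpha(D^{\uparrow n})\cdot\bm\alpha(D')=c\}$ is a scalar multiple of $(\bm\alpha(D^{\uparrow n}),c)$. For a non-pattern arc $b$ we exhibit tight DAGs $D_1$ and $D_2=D_1\cup\{b\}$, whence $a_b=0$: if $b$ is incident to a vertex outside the pattern, add $b$ to any pattern-optimal DAG (one attaining $c$ with pattern arcs only) — the outside endpoint has in- or out-degree zero, so no directed cycle can use $b$ — and for the remaining arcs (reversed vertical arcs, same-level arcs, chords) one picks $D_1$ among the many pattern-optimal DAGs so that $b$ enters a sink or leaves a source. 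For two pattern arcs $e,e'$ we exhibit tight DAGs with $\bm\alpha(D_2)=\bm\alpha(D_1)+\bm e_e-\bm e_{e'}$, forcing $a_e=a_{e'}$: for the $k$-cycle, $D_1$ and $D_2$ are the cycle minus $e'$ and minus $e$; for the fence and M\"obius, swap one arc in and one arc out of a maximum acyclic subdigraph. Thus $\bm a=\mu\,\bm\alpha(D^{\uparrow n})$ for a constant $\mu$, and evaluating on any tight DAG gives $a_0=\mu c$. Since the space of such equations is one-dimensional, $\dim\mathcal F=n(n-1)-1$; $\mathcal F$ is nonempty (it contains a pattern-optimal DAG) and proper (the empty DAG gives left-hand side $0<c$), so the inequality is facet-defining.

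The main obstacle is confined to the $k$-fence and, especially, the $k$-M\"obius case. On the validity side it is the \emph{lower} bound on the minimum feedback arc set of the M\"obius digraph — the half that is not an explicit construction and that requires a careful fractional covering. On the facet side it is the bookkeeping needed near the irregular arcs — the two special vertical arcs of the fence and the two crossing arcs of the M\"obius digraph — to check that enough pattern-optimal DAGs exist, related by the single-arc additions and swaps above, to pin down every coefficient. This is precisely the case analysis carried out by Gr\"otschel, J\"unger, and Reinelt~\cite{acyclic1985}; given Lemma~\ref{lemma:dag} and the general facts on $0/1$ polytopes recalled above, it suffices to invoke (or transcribe) their proof.
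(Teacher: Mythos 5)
The paper does not prove this theorem itself; it is imported verbatim from Gr\"otschel, J\"unger, and Reinelt~\cite{acyclic1985}, so there is no in-paper argument to compare against. Your sketch correctly reconstructs the route that reference takes: validity reduces to computing the maximum acyclic subdigraph of the pattern (your fence count $s+k(k-1)-s(s-1)/2$, maximized at $s\in\{1,2\}$, is right, as is the reduction of the M\"obius bound to a minimum feedback arc set of size $(k+1)/2$), and facet-definingness follows by the direct method of pinning down the coefficients of any tight valid equation via single-arc additions and arc swaps between tight DAGs. You also correctly locate where the real labour sits --- the lower bound $(k+1)/2$ on the M\"obius feedback arc set (obtainable from $(k-1)/2$ arc-disjoint internal four-cycles plus one further cycle through the crossing arcs) and the coefficient-pinning case analysis near the irregular arcs --- and, like the paper, you ultimately defer these details to the cited source, which is appropriate for an imported result.
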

Many variations of these facet-defining inequalities are known (see Ref.~\cite{goemans1996} for a list).
While Theorem \ref{thm:liftingfacets} holds for all facet-defining inequalities of the polytope~$Q_n$, we only focus on these simple forms.

We briefly discuss these inequalities.
For the digraph~$D=([3],\{(0,1),(1,0),(2,0),(2,1)\})$ (see Figure~\ref{fig:embeddingcycle}),
\begin{figure}
	\centering
	\begin{subfigure}[t]{.2\textwidth}
		\begin{tikzpicture}[every path/.style={-stealth,thick}]
			\node[vertex,label={below:$2$}] (v2) at (0,0) {};
			\node[vertex,label={above left:$0$}] (v0) at (-1,1) {};
			\node[vertex,label={above right:$1$}] (v1) at (+1,1) {};
			\draw[line width=1.5mm,red!40!white] (v0) to[bend left] (v1);
			\draw[line width=1.5mm,red!40!white] (v1) to[bend left] (v0);
			\draw (v2) -- (v0);
			\draw (v2) -- (v1);
			\draw (v0) to[bend left] (v1);
			\draw (v1) to[bend left] (v0);
		\end{tikzpicture}
		\caption{ }
		\label{fig:embeddingcycle}
	\end{subfigure}
  ~
	\begin{subfigure}[t]{.2\textwidth}
		\begin{tikzpicture}[every path/.style={-stealth,thick}]
			\def\xscale{1}
			\def\yscale{1}
			\node[vertex,label={above:$0$}] (v0) at (0,1) {};
			\node[vertex,label={above:$4$}] (w1) at (1,1) {};
			\node[vertex,label={above:$2$}] (v2) at (2,1) {};
			\node[vertex,label={left:$3$}] (w0) at (0,0) {};
			\node[vertex,label={below:$1$}] (v1) at (1,0) {};
			\node[vertex,label={right:$5$}] (w2) at (2,0) {};
			\draw[line width=1.5mm,red!40!white] (v0) -- (w0);
			\draw[line width=1.5mm,red!40!white] (v1) -- (w1);
			\draw[line width=1.5mm,red!40!white] (v2) -- (w2);
			\draw[line width=1.5mm,red!40!white] (w0) -- (v1);
			\draw[line width=1.5mm,red!40!white] (w1) -- (v0);
			\draw[line width=1.5mm,red!40!white] (w1) -- (v2);
			\draw[line width=1.5mm,red!40!white] (w2) -- (v1);
			\draw[line width=1.5mm,red!40!white,rounded corners] (w0) |- ++(2.5,-0.5) |- (v2);
			\draw[line width=1.5mm,red!40!white,rounded corners] (w2) |- ++(-2.5,-0.75) |- (v0);
			\draw[dotted] (v0) -- (w0);
			\draw[dotted] (v1) -- (w1);
			\draw[dotted] (v2) -- (w2);
			\draw (w0) -- (v1);
			\draw (w1) -- (v0);
			\draw (w1) -- (v2);
			\draw (w2) -- (v1);
			\draw[rounded corners] (w0) |- ++(2.5,-0.5) |- (v2);
			\draw[rounded corners] (w2) |- ++(-2.5,-0.75) |- (v0);
		\end{tikzpicture}
		\caption{ }
		\label{fig:embeddingfence}
	\end{subfigure}
	\caption{(a) The digraph~$D$ on top of the~$2$-cycle digraph~$C_2$. (b) A weighted digraph (dotted arcs have weight 1/2) on top of the~$3$-fence digraph.}
	\label{fig:embedding}
\end{figure}
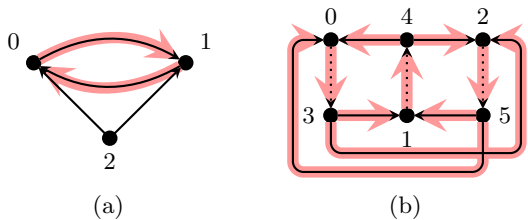
the vector~$\bm \alpha(D)$ is {\em not\/} in the polytope~$\mathcal Q_3$ (obviously,~$D$ is not a DAG), because the~$2$-cycle inequality is violated:
\begin{align}
	\bm\alpha(C_2^{\uparrow 3})\cdot\bm\alpha(D) = 2 > 1
	\,.
\end{align}
The scalar product in the evaluation of the inequality corresponds to counting the number of the arcs simultaneously present in both digraphs.
For another illustration, let~$\bm v$ be the vector that corresponds to the uniform mixture of two digraphs~$(\bm \alpha(D_0) + \bm\alpha(D_1))/2$,
where~$D_0$ is isomorphic to the~$3$-fence digraph, shown in Figure~\ref{fig:embeddingfence}, and~$D_1$ is the same digraph with the vertical arcs~$\{(0,3),(1,4),(2,5)\}$ missing.\footnote{The vector~$\bm v$ can be understood as the adjacency vector of a {\em weighted\/} digraph.}
First, note that~$\bm v$ violates the~$3$-fence inequality:
\begin{align}
	\bm\alpha(D_0)\cdot\bm v = 7.5 > 7
	\,.
\end{align}
However, the vector~$\bm v$ does {\em not\/} violate any~$k$-cycle inequality.
For instance, consider digraph~$\hat C$ with the arcs~$\{(0,3),(3,1),(1,4),(4,0)\}$ and vertices~$\{0,1,\dots,5\}$.
When we evaluate the corresponding~$4$-cycle inequality, we obtain
\begin{align}
	\bm\alpha(\hat C)\cdot\bm v = 3 \leq 3
	\,;
\end{align}
the inequality is satisfied.
This illustrates that the~$k$-cycle inequalities are {\em insufficient\/} to single out the DAG polytope~$\mathcal Q_n$.
For~$0/1$ vectors~$\bm v\in\{0,1\}^{n(n-1)}$ the~$k$-cycle inequalities are---by definition---sufficient to detect whether~$\bm v$ corresponds to the adjacency vector of a DAG or not.
This insufficiency for non-$0/1$ vectors means that the DAG polytope has additional structure that emerges with increasing order (see also Refs.~\cite{acyclic1985,junger1985}).

\subsection{Partial-order inequalities}
We introduce the concept of a digraph game:
\begin{definition}[Digraph game]
	For~$n$ parties~$[n]$ and a digraph~\mbox{$D=(\mathcal V\subseteq[n],\mathcal A)$}, the {\em digraph game~$\Gamma(n,D)$\/} is as follows.
	First, a referee picks at random an arc~$(s,r)$ from~$\mathcal A$, and a bit~$x\in\{0,1\}$.
	Then, the referee announces~$(s,r)$ to all parties~$[n]$, and in addition, distributes~$x$ to party~$s$.
	We say that the~$n$ parties {\em win\/} the game~$\Gamma(n,D)$ whenever the output of party~$r$ equals~$x$, and denote this event by~$\mathcal W(\Gamma(n,D))$.
\end{definition}
We may now combine all previous lemmas with Theorem~\ref{thm:acyclic1985}, and obtain Bell games with which dynamical causal relations among the parties are detected.
\begin{theorem}[Bell games]
	\label{thm:bellgames}
	Consider~$n$ parties~$[n]$ and a digraph~$D=(\mathcal V\subseteq[n],\mathcal A)$.
	If~$(\bm \alpha(D^{\uparrow n}),c)$ is a facet-defining inequality of~$\DAG_n$, then the maximum winning probability of the game~$\Gamma(n,D)$ with partial-order correlations is bounded by
	\begin{align}
		\max_{p\in\mathcal C_n}
		\Pr[\mathcal W(\Gamma(n,D))]
		\leq
		\frac{1}{2}
		+
		\frac{c}{2|\mathcal A|}
		\,.
	\end{align}
	In particular, if~$D$ is isomorphic to the~$k$-cycle digraph, then the bound is~$1-\frac{1}{2k}$,
	if~$D$ is isomorphic to the~$k$-fence digraph, then the bound is~$1-\frac{k-1}{2k^2}$,
	and if~$D$ is isomorphic to the~$k$-M\"obius digraph, then the bound is~$1-\frac{k+1}{12k}$.
	Each of these inequalities is facet defining for the partial-order-correlations polytope~$\mathcal P_n$.
\end{theorem}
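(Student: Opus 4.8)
The plan is to establish the theorem in three moves: first reformulate the winning probability of $\Gamma(n,D)$ as a linear functional on the polytope $\mathcal{P}_n$, then use Theorem~\ref{thm:liftingfacets} together with Theorem~\ref{thm:acyclic1985} to turn the known DAG-polytope facets into facets of $\mathcal{P}_n$, and finally compute the three explicit numerical bounds by plugging $c$ and $|\mathcal{A}|$ into the general formula. For the first move I would fix $p\in\mathcal{C}_n$ with characteristic vector $\bm\chi(p)=(p(0\mid s,r,x))_{(s,r,x)}$, note that the referee picks $(s,r)$ uniformly from $\mathcal{A}$ and $x$ uniformly from $\{0,1\}$, and write
\begin{align}
	\Pr[\mathcal{W}(\Gamma(n,D))]
	=
	\frac{1}{2|\mathcal{A}|}
	\sum_{(s,r)\in\mathcal{A}}
	\bigl(p(0\mid s,r,0)+p(1\mid s,r,1)\bigr)
	\,.
\end{align}
Using total probability $p(1\mid s,r,1)=1-p(0\mid s,r,1)$, the summand equals $1+p(0\mid s,r,0)-p(0\mid s,r,1)$, so the winning probability becomes $\tfrac12+\tfrac{1}{2|\mathcal{A}|}\,(\bm w,-\bm w)\cdot\bm\chi(p)$, where $\bm w$ is the $0/1$ vector with a one exactly at the $(s,r)$-coordinates of $\mathcal{A}$, i.e.\ $\bm w=\bm\alpha(D^{\uparrow n})$. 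Thus $\bm w$ is precisely the weight vector of the facet-defining inequality of $\mathcal{Q}_n$ supplied by Theorem~\ref{thm:acyclic1985}, and maximizing the winning probability over $\mathcal{C}_n$ is the same as maximizing $(\bm w,-\bm w)\cdot\bm p$ over $\bm p\in\mathcal{P}_n$.

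For the second move I would invoke Theorem~\ref{thm:liftingfacets}. Its hypotheses are met: $\mathcal{P}_n$ is a full-dimensional $0/1$ polytope (Lemma~\ref{lemma:geometricrepresentation}) and is pairwise centrally symmetric (Lemma~\ref{lemma:pcs}); its projection is the DAG polytope $\mathcal{Q}_n=\AC(K_n^{\mathrm{di}})$ (Lemma~\ref{lemma:dag}), which is full-dimensional since it is a $0/1$ polytope containing $\bm 0$ and every unit vector (the empty DAG and all single-arc DAGs); and each of the $k$-cycle, $k$-fence, and $k$-M\"obius inequalities $(\bm\alpha(D^{\uparrow n}),c)$ is non-negative (adjacency vectors have entries in $\{0,1\}$) and non-trivial (each of these digraphs has at least two arcs). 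Hence $((\bm w,-\bm w),c)$ with $\bm w=\bm\alpha(D^{\uparrow n})$ is a non-trivial facet-defining inequality of $\mathcal{P}_n$. Validity of this inequality gives $(\bm w,-\bm w)\cdot\bm p\le c$ for all $\bm p\in\mathcal{P}_n$, which immediately yields
\begin{align}
	\max_{p\in\mathcal{C}_n}\Pr[\mathcal{W}(\Gamma(n,D))]
	\le
	\frac12+\frac{c}{2|\mathcal{A}|}
	\,,
\end{align}
and the ``facet defining for $\mathcal{P}_n$'' clause of the theorem. (I would also remark that because facet-defining inequalities are tight, the bound is in fact attained, though the statement only claims the inequality.)

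The third move is routine arithmetic with the data from Theorem~\ref{thm:acyclic1985}. For the $k$-cycle digraph, $|\mathcal{A}|=k$ and $c=k-1$, giving $\tfrac12+\tfrac{k-1}{2k}=1-\tfrac{1}{2k}$. For the $k$-fence digraph, $|\mathcal{A}|=k+k(k-1)=k^2$ (the $k$ ``vertical'' arcs plus the $k(k-1)$ ``crossing'' arcs) and $c=k^2-k+1$, giving $\tfrac12+\tfrac{k^2-k+1}{2k^2}=1-\tfrac{k-1}{2k^2}$. For the $k$-M\"obius digraph, counting arcs of $L_2\times L_k$ that sit on internal faces as four-cycles plus the two crossing arcs yields $|\mathcal{A}|=3k$ (namely $k$ rungs, $2(k-1)$ side-rail arcs, and $2$ crossing arcs), and with $c=(5k-1)/2$ one gets $\tfrac12+\tfrac{(5k-1)/2}{2\cdot 3k}=\tfrac12+\tfrac{5k-1}{12k}=1-\tfrac{k+1}{12k}$. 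In particular $k=5$ gives the $11/12$ bound advertised in the introduction. I expect the only genuinely delicate point to be the bookkeeping in the first step—making sure the $x$-distribution and the indexing of $\bm\chi$ combine so that the affine map is exactly $\bm p\mapsto\tfrac12+\tfrac{1}{2|\mathcal{A}|}(\bm w,-\bm w)\cdot\bm p$ with $\bm w$ the adjacency vector rather than some relabeled or rescaled variant—and the arc-count of the $k$-M\"obius digraph, which must be read off carefully from its definition; everything after that is a direct appeal to the already-proven lifting theorem.
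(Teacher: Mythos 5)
Your proposal is correct and follows essentially the same route as the paper: lift the Grötschel--Jünger--Reinelt facets of $\mathcal Q_n=\AC(K_n^{\text{di}})$ to $\mathcal P_n$ via pairwise central symmetry and Theorem~\ref{thm:liftingfacets}, rewrite the winning probability as $\tfrac12+\tfrac{1}{2|\mathcal A|}(\bm w,-\bm w)\cdot\bm\chi(p)$ using total probability, and substitute the arc counts $k$, $k^2$, and $3k$. Your explicit verification of the hypotheses of the lifting theorem (full-dimensionality of $\mathcal Q_n$, non-negativity and non-triviality of the inequalities) is a welcome bit of care that the paper's proof states only in passing.
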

\begin{proof}
	The set of single-output correlations among~$n$ partially ordered parties forms a polytope~$\mathcal P_n$ (Lemma~\ref{lemma:geometricrepresentation}) which is pairwise centrally symmetric (Lemma~\ref{lemma:pcs}).
	By applying the natural projection (Definition~\ref{def:projection}), we obtain the polytope~$\mathcal Q_n$ where all extremal vectors are the adjacency vectors of DAGs (Lemma~\ref{lemma:dag}).
	From Theorem~\ref{thm:acyclic1985}, we get some facet-defining inequalities for~$\mathcal Q_n$.
	All these inequalities are non-trivial and non-negative, and the lifting theorem (Theorem~\ref{thm:liftingfacets}) is applicable.
	Therefore, in the respective cases where~$D^{\uparrow n}$ is the trivial embedding of a~$k$-cycle,~$k$-fence, or~$k$-M\"obius digraph, we have as facet-defining inequalities for~$\mathcal P_n$
	the~$k$-cycle inequality~$((\bm\alpha(D^{\uparrow n}),-\bm\alpha(D^{\uparrow n})), k-1)$,
	the~$k$-fence inequality~$((\bm\alpha(D^{\uparrow n}),-\bm\alpha(D^{\uparrow n})), k^2-k+1)$, and
	the~$k$-M\"obius inequality~\mbox{$((\bm\alpha(D^{\uparrow n}),-\bm\alpha(D^{\uparrow n})), (5k-1)/2)$}.

	Now, let~$p\in\mathcal C_n$ be some single-output partial-order correlations, set~$\bm p$ as its characteristic vector, and let~$((\bm\alpha(D^{\uparrow n}),-\bm\alpha(D^{\uparrow n})),c)$ be a facet-defining inequality from above.
	In evaluating the inequality with respect to~$\bm p$, we get
	\begin{align}
		&
		(\bm\alpha(D^{\uparrow n}),-\bm\alpha(D^{\uparrow n}))\cdot\bm p
		\\
		&
		=
		\sum_{(s,r)\in\mathcal A} p(0|s,r,0)
		-
		\sum_{(s,r)\in\mathcal A} p(0|s,r,1)
		\\
		&
		=
		\sum_{(s,r)\in\mathcal A}
		\left(
		p(0|s,r,0)
		-
		(1-p(1|s,r,1))
		\right)
		\\
		&
		=
		\sum_{\substack{x\in\{0,1\}\\(s,r)\in\mathcal A}}
		p(x|s,r,x)
		-
		|\mathcal A|
		\leq c
		\,.
	\end{align}
	By moving~$|\mathcal A|$ to the right side, and by multiplying the inequality with the uniform probability for the referee announcing~$(s,r)$ and~$x$, we get
	\begin{align}
		\Pr[a=x]
		\leq
		\frac{|\mathcal A| + c}{2|\mathcal A|}
		\,.
	\end{align}
	Finally, we obtain the stated bounds by noting that the~$k$-cycle digraph has~$|\mathcal A|=k$, the~$k$-fence digraph has~\mbox{$|\mathcal A|=k^2$}, and the~$k$-M\"obius digraph has~$|\mathcal A|=3k$.
\end{proof}

Note that for increasing~$k$, the bounds for the~$k$-cycle and the~$k$-fence games approach one; they may be won with partial-order correlations.
In contrast, the winning chance of the~$k$-M\"obius game under the same conditions never exceeds~$11/12$.
This elevates the~$k$-M\"obius game as a preferred and robust test for dynamical causal order, and shows that dynamical causal order might be unlimited:
That feature does not vanish for large numbers of events.
The same quality has been observed for nonlocal correlations~\cite{collins2002,seevinck2002} and non-causal correlations~\cite{baumeler2020}.

\subsection{Partial-order and causal inequalities}
Causal inequalities, similar to partial-order inequalities, limit the correlations among~$n$ parties under the assumption of a global, {\em possibly dynamical,\/} causal order.
Here, in contrast with partial-order inequalities, a~party may influence the causal relations of the parties in its causal future.
Clearly, partial-order correlations are a~subset of the causal ones.
Causal inequalities have been extensively studied in the context of indefinite causal order (see, {\it e.g.,} Refs.~\cite{oreshkov2012,baumeler2014,araujo2015,branciard2015,alastaircorr2016,oreshkov2016,baumeler2020}).
While we are mainly concerned about partial-order inequalities, we nevertheless introduce causal correlations and the respective inequalities.
The reason is that they coincide with the former for an even simpler scenario than studied above.
Moreover, we recover a Bell game that was studied in Ref.~\cite{tselentis2022admissible}.

First, we introduce causal correlations in the general setting.
\begin{definition}[Causal correlations]
	\label{def:causalcorrelations}
	The~$n$-party correlations~$p(\underline a|\underline x)$, where~$x_k$ is the input to party~$k$, and~$a_k$ is the output from party~$k$, are {\em causal\/} if and only if they decomposed as
	\begin{align}
		p(\underline a|\underline x)
		=
		\sum_{i\in[n]}
		p(i)
		p(a_i|x_i)
		p_{a_i,x_i}^i(a_{\setminus \{i\}}|x_{\setminus \{i\}})
		\,,
	\end{align}
	where~$p(i)$ is a probability distribution over~$[n]$, and~$p_{a_i,x_i}^i(a_{\setminus \{i\}}|x_{\setminus \{i\}})$ are~$(n-1)$-party causal correlations.
\end{definition}
In this recursive definition, party~$i$ acts ``first'' (hence, its output may only depend on its input).
The remaining parties have access to~$i,a_i,x_i$, but again, there exists a party that acts ``first.''
The selection of that party may depend on~$i$,~$a_i$, and~$x_i$.

In the single-output scenario (Definition~\ref{def:sos}) studied above, the input~$(s,r)$ specifies the output-providing party~$r$ and the party~$s$ who gets the additional input~$x$.
Instead, one can neglect~$s$, and provide~$x$ to {\em every\/} party, except---to make it non-trivial---to party~$r$.
\begin{definition}[All-to-one scenario]
	For~$n\geq 2$ parties~$[n]$, the input space of party~$k\in[n]$ is~\mbox{$\mathcal X_k:=[n] \times\mathcal Z_k$},
	where the space~$\mathcal Z_k$ depends on the input to the~$[n]$ part.
	Similarly, the output space~$\mathcal A_k$ depends on the input to the~$[n]$ part.
	For~$r\in\mathcal [n]$,
	\begin{align}
		\mathcal Z_k = 
		\begin{cases}
			[2] & \text{if }k\neq r,\\
			\emptyset & \text{otherwise,}
		\end{cases}
		\quad
		\mathcal A_k = 
		\begin{cases}
			[2] & \text{if }k=r,\\
			\emptyset & \text{otherwise.}
		\end{cases}
	\end{align}
	The input to the~$[n]$ part is {\em shared\/} with all parties, and the input to the~$\mathcal Z_k$ part is {\em shared\/} with all parties but~$r$.
\end{definition}
As above, one can define the set of partial-order correlations~$\mathcal C^\text{all-to-1}_n$ for this simplified scenario and study the polytope~$\mathcal P^\text{all-to-1}_n:=\bm\chi'(\mathcal C^\text{all-to-1}_n)\in\mathbb R^{2n}$,
and likewise for causal correlations.
Here, the characteristic vector of some conditional probability distribution~$p\in\mathcal C^\text{all-to-1}_n$ is obtained from
\begin{align}
	\bm\chi': \mathcal C^\text{all-to-1}_n&\rightarrow \mathbb R^{2n}\\
	p&\mapsto 
	(p(0|r,x))_{(r,x)\in[n]\times[2]}
	\,.
\end{align}
Note that in both cases, partial-order and causal, the output~$a$ of party~$r$ {\em must be\/} independent of~$x$ only if~$r$ is first in the partial, respectively causal order.
Hence, we arrive at the following statement, which is trivial to prove.
\begin{lemma}[All-to-one partial-order and causal correlations]
	For~$n$ parties~$[n]$, the correlations~$p(a|r,x)$, where~$a$ is the output of party~$r$,~$r$ is the input to every party, and~$x$ is an input to the parties~$[n]\setminus\{r\}$, are partial-order correlations if and only if
	there exists a distribution~$p(k)$ over all parties,
	and a pair of conditional probability distributions~$(p^{\not\preceq}(a|r),p^\preceq(a|r,x))$, such that
	\begin{align}
		p(a|r,x)
		=
		p(r)
		p^{\not\preceq}(a|r)
		+
		(1-p(r))
		p^{\preceq}(a|r,x)
		\,.
	\end{align}
	These correlations are causal correlations if and only if the same decomposition exists.
\end{lemma}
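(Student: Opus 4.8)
The plan is to establish the partial-order equivalence by a direct bookkeeping of the decomposition in Def.~\ref{def:pocorr}, and then to obtain the causal equivalence almost for free, using the trivial inclusion that partial-order correlations are causal. For the ``only if'' part of the partial-order case I would start from a decomposition as in Def.~\ref{def:pocorr} specialised to the all-to-one scenario; by the footnote to Def.~\ref{def:pocorr} I may assume $p(\sigma)$ is supported on \emph{total} orders, so each $\sigma$ in the support has a unique first party. When the shared input is $r$, the output $a$ of party~$r$ depends only on the inputs of the parties $\preceq_\sigma r$, and since every party other than $r$ holds the bit $x$ while $r$ holds no further input, the framework forces $a$ to be independent of $x$ precisely when $r$ is the first party of $\sigma$ (and otherwise allows dependence on $x$). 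Splitting $p(a\mid r,x)=\sum_\sigma p(\sigma)\,p_{r,\sigma}(a\mid\cdots)$ into the orders whose first party is $r$ and those whose first party is not $r$, and collecting each group into a convex combination $p_r^{\not\preceq}(a\mid r)$, resp.\ $p_r^{\preceq}(a\mid r,x)$ (defining the component with zero weight arbitrarily), yields the stated form with $p(r):=\sum_{\sigma:\ r\text{ first}}p(\sigma)$; this $p(r)$ is a bona fide distribution on $[n]$ exactly because every total order has a \emph{unique} first element, so the weights sum to one.

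For the ``if'' direction I would build a partial-order correlation realising a given decomposition: draw a party $K$ from $p(k)$, use the total order $\sigma_K$ in which $K$ is first and the remaining parties follow in, say, increasing order, let every party with trivial output stay silent, and let party $r$ output according to $p_r^{\not\preceq}(\cdot\mid r)$ when $r=K$ (legitimate, since $r$ is then first and sees no $x$) and according to $p_r^{\preceq}(\cdot\mid r,x)$ when $r\neq K$ (legitimate, since $K\prec_{\sigma_K} r$ and $K$, being distinct from $r$, holds $x$). Averaging over $K$ reproduces $p(r)\,p_r^{\not\preceq}(a\mid r)+(1-p(r))\,p_r^{\preceq}(a\mid r,x)$, so this correlation lies in $\mathcal C^{\text{all-to-1}}_n$.

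For the causal case the ``if'' direction is immediate, since the correlation just constructed is already a partial-order correlation, hence \emph{a fortiori} a causal one. For the ``only if'' direction I would apply Def.~\ref{def:causalcorrelations} in the all-to-one scenario and split the outer sum $\sum_i p(i)(\cdots)$ according to whether the party~$i$ that acts first equals $r$: if $i=r$, party~$r$'s output may depend only on its own input, which carries no $x$, contributing $p(r)\,p_r^{\not\preceq}(a\mid r)$; if $i\neq r$, party~$i$ has trivial output and the residual $(n-1)$-party causal correlation determines $a$, possibly as a function of $x$ (held by some earlier party), contributing a term to be absorbed into $(1-p(r))\,p_r^{\preceq}(a\mid r,x)$, where now $p(r)$ is the probability that $r$ acts first. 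The only point that is more than pure rearrangement of sums is the reduction to total orders in the partial-order case—needed so that the weights $p(r)$ genuinely form a distribution, rather than merely summing to at least one over the minimal elements of a partial order—together with the analogous remark that the recursion of Def.~\ref{def:causalcorrelations} selects a unique first-acting party on each branch; once these are noted, the rest is the promised routine computation.
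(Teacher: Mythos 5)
Your argument is correct and follows essentially the same route as the paper, which omits a formal proof and instead relies on the single observation preceding the lemma---that in the all-to-one scenario the output~$a$ must be independent of~$x$ exactly when~$r$ acts first---together with the footnote reducing partial orders to total orders. Your write-up simply carries out the bookkeeping (conditioning on whether~$r$ is first, normalizing the two groups, and constructing the realizing total orders for the converse) that the authors deem trivial, and it does so without error.
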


We directly present our result.
In the proof, we again exploit the property of pairwise central symmetry.
The projected polytope, as shown in the proof, turns out to be the {\em faulty hypercube.}
\begin{theorem}[All-to-one Bell game]
	\label{thm:bellgamesimplified}
	For~$n\geq 2$ parties~$[n]$, a referee picks uniformly at random a ``receiver''~$r\in[n]$, and a bit~$x\in\{0,1\}$,
	and distributes~$r$ to every party, and~$x$ to every party but~$r$.
	The inequality
	\begin{align}
		\Pr[a=x] \leq 1-\frac{1}{2n}
	\end{align}
	is facet-defining for~$\mathcal P^\text{\normalfont{all-to-1}}_n$.
\end{theorem}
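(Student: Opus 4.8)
The plan is to run exactly the argument behind Theorem~\ref{thm:bellgames} in this simpler scenario: establish that $\mathcal P^{\text{all-to-1}}_n=\bm\chi'(\mathcal C^{\text{all-to-1}}_n)\subseteq\mathbb R^{2n}$ is a full-dimensional, pairwise centrally symmetric $0/1$ polytope; compute its projection; read off the one relevant facet there; and lift it with Theorem~\ref{thm:liftingfacets}. The geometric preliminaries go as in Lemma~\ref{lemma:geometricrepresentation} and Lemma~\ref{lemma:pcs}: $\bm\chi'$ is linear, the correlations are convex, and every all-to-one partial-order correlation is a mixture of deterministic ones, so $\mathcal P^{\text{all-to-1}}_n$ is a $0/1$ polytope; it is full-dimensional because it contains $\bm 0$ (everyone always outputs $1$) and all $2n$ unit vectors (on shared input $r_0$, receiver $r_0$ outputs $x\oplus x_0$ and all other receivers output $1$, which is a legal deterministic strategy as long as $r_0$ is not placed first in the order, possible since $n\geq 2$); and it is pairwise centrally symmetric under the pairing that groups coordinates $(r,0)$ and $(r,1)$ for each $r\in[n]$, since relabelling the output of receiver $\hat r$ flips exactly those two coordinates, which is Eq.~\eqref{eq:pcs}.

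Next I would identify the projection. Under $\pi_n$ the $r$-th coordinate of a deterministic correlation is $p(0|r,0)\oplus p(0|r,1)$, which is $1$ precisely when receiver $r$'s output depends on $x$. In the all-to-one scenario receiver $r$ holds no input of its own, so its output can depend on $x$ only if some party in its causal past holds $x$, i.e.\ only if $r$ is not the minimal element of the order; hence every deterministic correlation projects to some $\bm q\in\{0,1\}^n$ with $\bm q\neq\bm 1$. Conversely, given any $\bm q\neq\bm 1$, pick $j$ with $q_j=0$, order $j$ first, let receiver $j$ ignore $x$, and let every other receiver $r$ output $x$ when $q_r=1$ and $0$ when $q_r=0$; this is a legal deterministic partial-order correlation projecting to $\bm q$. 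Therefore $\ext(\mathcal Q^{\text{all-to-1}}_n)=\{0,1\}^n\setminus\{\bm 1\}$, so the projected polytope is the \emph{faulty hypercube}, which for $n\geq 2$ is full-dimensional as it still contains $\bm 0$ and all $\bm e_i$.

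Finally I would lift and translate. The inequality $(\bm 1,n-1)$, i.e.\ $\sum_{i\in[n]}q_i\leq n-1$, is valid for the faulty hypercube, non-negative, and non-trivial for $n\geq 2$; it is facet-defining because the $n$ vertices $\bm 1-\bm e_i$ ($i\in[n]$) all lie in the faulty hypercube, saturate it, and are affinely independent (their pairwise differences $\bm e_0-\bm e_i$ span an $(n-1)$-dimensional space). By Theorem~\ref{thm:liftingfacets}, $((\bm 1,-\bm 1),n-1)$ is then facet-defining for $\mathcal P^{\text{all-to-1}}_n$. Evaluating it on $\bm\chi'(p)$ gives $\sum_r p(0|r,0)-\sum_r p(0|r,1)\leq n-1$; substituting $p(0|r,1)=1-p(1|r,1)$ yields $\sum_{x\in\{0,1\}}\sum_r p(x|r,x)\leq 2n-1$, and multiplying by the uniform referee probability $1/(2n)$ over $(r,x)$ produces $\Pr[a=x]\leq(2n-1)/(2n)=1-1/(2n)$. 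Since this is a positive rescaling and shift of a facet-defining inequality, it defines the same facet of $\mathcal P^{\text{all-to-1}}_n$.

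I expect the only genuine step to be the precise identification of $\mathcal Q^{\text{all-to-1}}_n$ as the faulty hypercube: on one side the observation that the minimal party can never produce an $x$-dependent output, so $\bm q=\bm 1$ is unreachable, and on the other the explicit strategy realizing every $\bm q\neq\bm 1$. Everything else is an immediate application of the pairwise-central-symmetry machinery together with an elementary facet check.
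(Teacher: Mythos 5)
Your proposal is correct and follows essentially the same route as the paper's proof: pairwise central symmetry, projection to the faulty $n$-cube via the minimal-element argument and an explicit forwarding strategy for the converse, lifting the single non-trivial facet $(\bm 1,n-1)$ with Theorem~\ref{thm:liftingfacets}, and rewriting the lifted inequality as the stated bound on $\Pr[a=x]$. The only differences are cosmetic: you spell out the affine-independence check for the faulty-cube facet and the full-dimensionality witnesses, which the paper asserts without detail.
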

\begin{proof}
	The set~$\mathcal P^\text{\normalfont{all-to-1}}_n$ is a full-dimensional convex 0/1 polytope in~$\mathbb R^{2n}$, where a vector~$\bm p=(p_{r,x})_{(r,x)\in[n]\times\{0,1\}}$ is a list of the probability that party~$r$ outputs~\mbox{$a=0$} on input~$r$ to all parties and input~$x$ to the parties~$[n]\setminus\{r\}$.
	This polytope is pairwise centrally symmetric, where~\mbox{$p_0=(p_{r,0})_{r\in[n]}$}, and~\mbox{$p_1=(p_{r,1})_{r\in[n]}$}, for the same reason as~$\mathcal P_n$ is: Each party might relabel the output.
	We apply the projection map (Definition~\ref{def:projection}) to obtain the polytope~$\mathcal Q^{\text{\normalfont{all-to-1}}}_n:=\conv(\pi_n(\ext(\mathcal P^{\text{\normalfont{all-to-1}}}_n)))\subseteq\mathbb R^{n}$.
	As it turns out, this polytope is the faulty~$n$-cube
	\begin{align}
		\ext(\mathcal Q^{\text{\normalfont{all-to-1}}}_n )
		=
		\{0,1\}^n
		\setminus
		\bm 1
		\,.
	\end{align}
	To see this, let~$\bm q=(q_r)_{r\in[n]}=\pi_n(\bm p)$ be an element of the left-hand side.
	The entries of this vector are~\mbox{$q_r=p_{r,0}\oplus p_{r,1}$}.
	The entry~$q_r$ is zero if the output~$a$ of party~$r$ does {\em not\/} depend on~$x$, and is one otherwise.
	Since~$\bm q$ arises from a partial ordering of the parties, we have that there always must exist a party~$r^-$ that is minimal with respect to that partial order.
	The entry~$q_{r^-}$ must be zero, and therefore~$\bm q\neq\bm 1$.
	For the converse, let~$\bm v=(v_r)_{r\in[n]}$ be a vertex of the faulty~$n$-cube, and suppose that for~$r_0\in[n]$ we have~$v_{r_0}=0$.
	Now, take the partial order~$\preceq_\sigma$ where~$r_0$ is minimal, {\it i.e.,} for all~\mbox{$r'\in[n]\setminus\{r_0\}$}, the relation~$r_0\preceq_\sigma r'$ holds.
	From this, we can construct the following strategy.
	If the selected party~$r$ to make the guess is~$r_0$, then party~$r_0$ outputs~$a=0$.
	In the alternative case, where~$r\neq r_0$, party~$r_0$ receives~$x$ from the referee and forwards~$x$ to party~$r$.
	Finally, party~$r$ outputs~$a=q_r x$.

	The faulty hypercube has a single non-trivial facet, which is defined by the inequality~$(\bm 1,n-1)$.
	This facet-defining inequality is non-negative, and therefore, we apply our facet-lifting theorem, and obtain the facet-defining inequality~$((\bm 1,-\bm 1),n-1)$ for the polytope~$\mathcal P^\text{\normalfont{all-to-1}}_n$.
	At last, this inequality
	\begin{align}
		(\bm 1,-\bm 1) \cdot (\bm p_0,\bm p_1)
		&
		=
		\sum_{r\in[n]} p(0|r,0) 
		-
		\sum_{r\in[n]} p(0|r,1) 
		\\
		&=
		\sum_{\substack{x\in\{0,1\}\\r\in[n]}} p(x|r,x) - n
		\\
		&\leq n-1
	\end{align}
	is turned into the Bell game as stated.
\end{proof}

\section{Causal models}
We present strategies with which the presented games are won.
In contrast to the device-independent approach that we followed until now, we give a physical description of the parties, how they are interlinked, and their actions.
This description is given in terms of causal models.
We start with a brief introduction to this framework.
The interested reader may consult Refs.~\mbox{\cite{barrett2021nature,tselentis2022admissible}} for details.

\subsection{Framework}
Consider a setup with~$n$ parties~$[n]$.
Each party~$k\in[n]$ is composed out of a past boundary and a future boundary (see Figure~\ref{fig:processparty}).
Party~$k$ receives a physical system on the past boundary.
The physical system is in a state from the set~$\mathcal I_k$.
After receiving that system, party~$k$ carries out an experiment on that system, and releases a system to the future boundary.
The released system is in a state from the set~$\mathcal O_k$.
In a classical-deterministic world, the experiment carried out by party~$k$ is a function~$\mu_k:\mathcal I_k\rightarrow\mathcal O_k$.
Party~$k$, additionally, may choose a specific experiment based on some experimental setting~$x_k\in\mathcal X_k$.
The experiment may also produce some experimental result~$a_k\in\mathcal A_k$.
Thus, in this general setting, the experiment of party~$k$ is some function~$\mu_k:\mathcal X_k\times\mathcal I_k\rightarrow\mathcal A_k\times\mathcal O_k$.
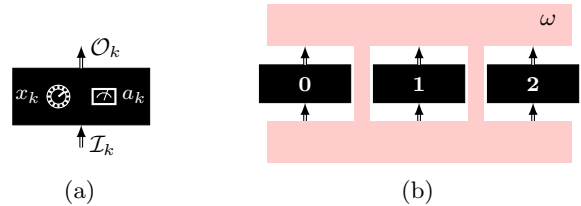
\begin{figure}
	\centering
	\begin{subfigure}[t]{.2\textwidth}
		\begin{tikzpicture}
			\def\scale{1.5}
			\node[box,minimum width=\scale*1.2cm,minimum height=\scale*0.5cm] (A) at (0,0) {};
			\coordinate (knob) at ($ (A.center) + (180:.3) $);
			\node at ($ (knob) + (180:0.4) $) {{\color{white}\footnotesize $x_k$}};
			\draw[white] (knob) circle [radius=0.1];
			\draw[white] (knob) circle [radius=0.15];
			\foreach \x in {1,...,12}
			{
				\def\angle{\x*360/12}
				\draw[white,thick] ($ (knob) + (\angle:0.1) $) -- +(\angle:0.05);
			}
			\def\angle{39}
			\draw[white,thick] (knob.center) -- +(\angle:0.1);
			\coordinate (meter) at ($ (A.center) + (0:.3) $);
			\node at ($ (meter) + (0:0.4) $) {{\color{white}\footnotesize $a_k$}};
			\draw[white,thick] ($ (meter) - (0.15,0.1) $) rectangle ($ (meter) + (0.15,0.1) $);
			\draw[white] ($ (meter) - (0,0.05) $) ++(20:0.1) arc (20:160:0.1);
			\def\angle{70}
			\draw[white] ($ (meter) - (0,0.05) $) -- ++(\angle:0.15);
			\draw[double,-latex] (0,0.5*\scale*0.5cm) -- ++(0,.3) node[pos=.9,right] {$\mathcal O_k$};
			\draw[double,latex-] (0,-0.5*\scale*0.5cm) -- ++(0,-.3) node[pos=.9,right] {$\mathcal I_k$};
		\end{tikzpicture}
		\caption{ }
		\label{fig:processparty}
	\end{subfigure}
  ~
	\begin{subfigure}[t]{.26\textwidth}
		\begin{tikzpicture}
			\def\ah{.25}
			\foreach \x in {0,1,2} {
				\node[box] (A\x) at (1.5*\x,0) {\color{white}\footnotesize\textbf{\x}};
				\draw[double,-latex] (A\x.north) -- ++(0,\ah);
				\draw[double,latex-] (A\x.south) -- ++(0,-\ah);
			}
			\path[fill=red!20!white] ($ (A0.north) + (-.5,\ah) $) rectangle ($ (A2.north) + (.5,.8) $);
			\path[fill=red!20!white] ($ (A0.south) + (-.5,-\ah) $) rectangle ($ (A2.south) + (.5,-.8) $);
			\path[fill=red!20!white] ($ (A0.south) + (.65,-\ah) $) rectangle ($ (A1.north) + (-.65,\ah) $);
			\path[fill=red!20!white] ($ (A1.south) + (.65,-\ah) $) rectangle ($ (A2.north) + (-.65,\ah) $);
			\node at ($ (A2.north) + (.2,.6) $) {$\omega$};
		\end{tikzpicture}
		\caption{ }
		\label{fig:environment}
	\end{subfigure}
	\caption{(a) Party~$k$ receives a system, performs an experiment~$\mu_k$ with experimental setting~$x_k$, observes the experimental result~$a_k$, and releases a system back to the environment. (b) Three parties interlinked by the environment (some process~$\omega$).}
	\label{fig:framework}
\end{figure}

The~$n$ parties are interlinked through the environment (see Figure~\ref{fig:environment}).
The environment takes the physical systems on the future boundaries of the parties, and provides systems to their past boundaries.
Thus, the environment is a function~$\omega:\underline{\mathcal O}\rightarrow\underline{\mathcal I}$.
If we do not request the parties to be situated at some specific locations, but instead assume that each party carries out their experiment exactly once, and that the parties may only communicate through the environment, we arrive at the following description of a process.
\begin{theorem}[Process~\cite{baumeler2016}]
	The function~$\omega:\underline{\mathcal O}\rightarrow\underline{\mathcal I}$ 
	is a {\em classical-deterministic process\/} if and only if
	\begin{align}
		\begin{split}
			\forall
			(\mu_k)_{k\in[n]}
			&
			\in
			(
			\{
			\mathcal I_k\rightarrow\mathcal O_k
			\}
			)_{k\in[n]}
			,\,
			\exists! \underline i\in\underline{\mathcal I}:\\
			\underline i &= \omega(\underline\mu(\underline i))
			\,,
		\end{split}
		\label{eq:fp}
	\end{align}
	where~$\exists!$ is the uniqueness quantifier.
\end{theorem}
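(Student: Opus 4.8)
The plan is to read \emph{classical-deterministic process} as the operational requirement implicit in the surrounding discussion: $\omega:\underline{\mathcal O}\to\underline{\mathcal I}$ is such a process precisely when, for every choice of deterministic local operations $\mu_k:\mathcal X_k\times\mathcal I_k\to\mathcal A_k\times\mathcal O_k$ and every setting profile $\underline x=(x_k)_{k\in[n]}$, the closed loop obtained by wiring the parties through $\omega$ has exactly one globally consistent configuration --- in particular a unique outcome tuple $\underline a$ --- so that the induced behavior $p(\underline a\mid\underline x)$ is a well-defined deterministic conditional distribution. The theorem asserts that this holds for all such operations if and only if it holds in the setting-free, outcome-free special case, which is exactly Eq.~\eqref{eq:fp}. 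I would prove the two implications separately: the forward direction is a specialization, and the converse is a ``freeze the settings'' argument.

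For the ``only if'' direction I would restrict to local operations with singleton $\mathcal X_k=\mathcal A_k=\{\star\}$. Such an operation is canonically a function $\mu_k:\mathcal I_k\to\mathcal O_k$; a configuration is then a pair $(\underline i,\underline o)$ with $\underline o=\underline\mu(\underline i)$, and ``globally consistent'' means $\underline i=\omega(\underline o)$. Hence, for these operations, possessing a unique consistent configuration is literally the statement that there is a unique $\underline i$ with $\underline i=\omega(\underline\mu(\underline i))$, i.e.\ Eq.~\eqref{eq:fp}.

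For the ``if'' direction, assume Eq.~\eqref{eq:fp} and fix arbitrary operations $\mu_k:\mathcal X_k\times\mathcal I_k\to\mathcal A_k\times\mathcal O_k$ together with a setting profile $\underline x$. Write $\mu_k(x_k,i_k)=(\alpha_k(x_k,i_k),\nu_k(x_k,i_k))$ for the split into the $\mathcal A_k$- and $\mathcal O_k$-components, and freeze the settings by defining $\bar\mu_k:\mathcal I_k\to\mathcal O_k$, $\bar\mu_k(i_k):=\nu_k(x_k,i_k)$. Applying Eq.~\eqref{eq:fp} to $(\bar\mu_k)_{k\in[n]}$ produces the unique $\underline i^\star$ with $\underline i^\star=\omega(\underline{\bar\mu}(\underline i^\star))$; set $o_k^\star:=\bar\mu_k(i_k^\star)$ and $a_k^\star:=\alpha_k(x_k,i_k^\star)$. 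The one thing to verify is that $(\underline i^\star,\underline a^\star,\underline o^\star)$ is the \emph{unique} consistent configuration at $\underline x$: in any consistent configuration the party constraints force $o_k=\nu_k(x_k,i_k)=\bar\mu_k(i_k)$ and $a_k=\alpha_k(x_k,i_k)$ for all $k$, hence $\underline i=\omega(\underline o)=\omega(\underline{\bar\mu}(\underline i))$ is a fixed point and so $\underline i=\underline i^\star$ by Eq.~\eqref{eq:fp}, with $\underline a,\underline o$ then determined. Running this for every $\underline x$ yields a well-defined deterministic $p(\underline a\mid\underline x)$, so $\omega$ is a classical-deterministic process. (If one also wants the probabilistic version, it follows by writing stochastic local operations as mixtures of deterministic ones and invoking convexity, but that goes beyond the present statement.)

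I expect the only genuine difficulty to be conceptual rather than computational: making the definition of \emph{classical-deterministic process} precise enough that both directions are non-vacuous, and arguing that allowing the richer operations --- with settings and with outcomes --- cannot introduce inconsistencies absent from Eq.~\eqref{eq:fp}. That is the heart of the freezing step: the outcome registers $\mathcal A_k$ are pure outputs of the $\mu_k$ and are never routed back through $\omega$, so they are inert in the fixed-point equation, while fixing a setting merely selects, per party, one function among $\{\mathcal I_k\to\mathcal O_k\}$. Once this is spelled out, the correspondence between ``consistent configuration at $\underline x$'' and ``fixed point of $\omega\circ\underline{\bar\mu}$'' is a bijection, and both existence and uniqueness transfer directly from Eq.~\eqref{eq:fp}.
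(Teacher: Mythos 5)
The paper does not actually prove this theorem: it imports it from Ref.~[baumeler2016] and immediately moves on, remarking only that it ``follows from the process-matrix framework.'' So there is no in-paper argument to compare against, and your proposal should be judged as a self-contained derivation. On those terms it is essentially correct. The one point you must pin down is the definition of \emph{classical-deterministic process}, since the paper never states it independently of Eq.~\eqref{eq:fp}; in the cited reference it is the normalization condition of the process framework, i.e.\ that the expression $p(\underline a\mid\underline x)=\sum_{\underline i,\underline o}[\omega(\underline o)=\underline i][(\underline a,\underline o)=\underline\mu(\underline x,\underline i)]$ sums to one over $\underline a$ for \emph{all} local operations, including stochastic ones. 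Your ``unique consistent configuration'' reading is equivalent to this for deterministic operations, because $\sum_{\underline a}p(\underline a\mid\underline x)$ literally counts the fixed points of $\omega\circ\underline{\bar\mu}$ at frozen settings $\underline x$: zero fixed points gives total probability $0$, two or more gives total probability $\ge 2$, so normalization forces exactly one. Your freezing argument correctly identifies that the outcome registers $\mathcal A_k$ never re-enter $\omega$ and that a setting merely selects one function in $\{\mathcal I_k\to\mathcal O_k\}$ per party, so both directions reduce to Eq.~\eqref{eq:fp}. The extension to stochastic local operations, which you relegate to a parenthesis, is in fact the remaining substantive step of the cited theorem, but your sketch is the right one: the normalization sum is multilinear in the local maps, each stochastic map is a convex mixture of deterministic ones, and a convex combination of $1$'s is $1$. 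With that step written out, your argument is a complete and more explicit proof than the paper provides.
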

This theorem follows from the process-matrix framework~\cite{oreshkov2012}, and states that irrespective of the experiment carried out by the parties, the state of the input system is well-defined.
For a process~$\omega$ and a choice of experiments, the observed statistics are
\begin{align}
	p(\underline a|\underline x)
	=
	\sum_{\underline i,\underline o}
	[\omega(\underline o)=\underline i]
	[(\underline a,\underline o) = \underline\mu(\underline x,\underline i)]
	\,,
\end{align}
where we use~$[i=j]$ for the Kronecker delta~$\delta_{i,j}$.
Satisfying the law of total probability, this expression returns one for experiments without settings and results.

This notion of process is neatly combined with the notion of causal models.
A causal model is a pair:
The causal structure, which is a digraph~\mbox{$M=(\mathcal W,\mathcal S)$} with parties~$\mathcal W$,
and the model parameters, which are a family of functions~\mbox{$\{\omega_k:\mathcal O_{\Pa_M(k)}\rightarrow\mathcal I_k\}_{k\in[n]}$}.
Here,~\mbox{$\Pa_M(k):=\{p\mid(p,k)\in\mathcal S\}\subseteq\mathcal W$} denotes the set of parents of the vertex~$k\in\mathcal W$ with respect to the digraph~$M$.
By combining processes with causal models, we arrive at the following, where the connections among the parties are {\em faithfully represented\/} by the causal structure and where the model parameters are {\em consistent\/} in the sense that the laws of probability are conserved.
\begin{definition}[Faithful and consistent causal model~\cite{tselentis2022admissible}]
	The causal structure~$M$ and the model parameters~\mbox{$\{\omega_k:\mathcal O_{\Pa_M(k)}\rightarrow\mathcal I_k\}_{k\in[n]}$}
	form a {\em faithful and consistent causal model\/} if and only if
	for each vertex~$k\in\mathcal W$, the model parameter~$\omega_k$ depends on every argument, {\it i.e.,} 
	\begin{align}
		\forall p\in\Pa_M(k),\,
		&
		\exists o\in\mathcal O_{\Pa_M(k)\setminus\{p\}},\,
		o_0,o_1\in\mathcal O_p:\\
		&\omega_k(o,o_0) \neq \omega_k(o,o_1)
		\,,
	\end{align}
	and the function~$\omega=(\omega_k)_{k\in[n]}$ is a classical-deterministic process.
\end{definition}

Given a~causal structure~$M$, we build a causal model by defining some natural model parameters, the {\em veto model parameters,} where we use~\mbox{$\Ch_M(k):=\{v\mid(k,v)\in\mathcal S\}\subseteq\mathcal W$} to denote the set of children of the vertex~$k\in\mathcal W$ with respect to the digraph~$M$.
\begin{definition}[Veto model parameters \cite{tselentis2022admissible}]
	The {\em veto  model parameters\/} for the causal structure~$M=(\mathcal W,\mathcal S)$ are
	\begin{align}
		\mathcal I_k := \{0,1\}\,,&\qquad
		\mathcal O_k := \Ch_M(k)\cup\{\bot\}\\
		\omega_k:\mathcal O_{\Pa_M(k)} &\rightarrow \mathcal I_k\\
		(o_\ell)_{\ell\in\Pa_M(k)} &\mapsto  \prod_{\ell\in\Pa_M(k)} [k=o_\ell]
		\,.
	\end{align}
\end{definition}
These model parameters implement the following functionality, which justifies the name.
Each party~$k$ may specify one of its children on its future boundary, or ``nobody'' expressed with the bottom symbol~$\bot$.
If all parents of some party~$k$ specified ``$k$,'' then party~$k$ receives a one on the past boundary, and a zero otherwise.

In the following theorem, we combine some known results on causal models concerning a specific class of digraphs.
On the one hand, when a causal structure from that class is amended with the veto model parameters, the resulting causal model is always {\em faithful and consistent.}
On the other hand, these causal models always give rise to causal correlations only.
\begin{theorem}[Consistency and causal correlations~\cite{tselentis2022admissible}]
	\label{thm:admissible}
	If~$M=(\mathcal W,\mathcal S)$ is a {\em chordless siblings-on-cycles digraph,}
	{\it i.e.,} for each directed cycle~$C=(s_0,s_1,\dots)$ in~$M$ that traverses the vertices~$\mathcal W_C\subseteq\mathcal W$,
	(chordless) the arc set~$\mathcal S$ contains no chord~$(u,v)\in\mathcal W_C^2\setminus C$,
	and (siblings-on-cycles) the arc set~$\mathcal S$ contains at least two arcs~$(p,u_0),(p,u_1)$ with~$u_0,u_1\in\mathcal W_C$,
	then the causal model with causal structure~$M$ and the veto model parameters  
	is {\em a consistent and faithful causal model,}
	and for all experiments~$\{\mu_k\}_{k\in\mathcal W}$, the correlations~$p(\underline a|\underline x)$ are {\em causal.}
\end{theorem}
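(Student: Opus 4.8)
The statement bundles three claims about the veto model on~$M$: that it is \emph{faithful}, that it is \emph{consistent} ({\it i.e.,} that~$\omega=(\omega_k)_{k\in\mathcal W}$ is a classical-deterministic process, Eq.~\eqref{eq:fp}), and that it yields only \emph{causal} correlations. The quickest route is to quote the matching propositions of Ref.~\cite{tselentis2022admissible}---consistency and faithfulness of veto parameters on siblings-on-cycles structures, and causality of the induced correlations once such structures are in addition chordless---and to observe that ``chordless siblings-on-cycles'' supplies exactly their hypotheses. I sketch the underlying arguments. Faithfulness needs no assumption on~$M$: given~$k$ with~$\Pa_M(k)\neq\emptyset$ and a parent~$p\in\Pa_M(k)$, for every other parent~$\ell$ we have~$k\in\Ch_M(\ell)\subseteq\mathcal O_\ell$, so setting each such~$o_\ell:=k$ collapses~$\omega_k$ to the single factor~$[k=o_p]$; since~$k,\bot\in\mathcal O_p$ with~$\bot\neq k$, the choice~$o_p\in\{k,\bot\}$ toggles~$\omega_k$ between~$1$ and~$0$.

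For consistency I would verify Eq.~\eqref{eq:fp} directly: for every tuple of local maps~$\{\mu_k:\mathcal I_k\to\mathcal O_k\}_{k\in\mathcal W}$ the update~$F(\underline i)_k:=\prod_{\ell\in\Pa_M(k)}[k=\mu_\ell(i_\ell)]$ on~$\{0,1\}^{\mathcal W}$ must have a unique fixed point. The decisive structural fact is that siblings-on-cycles forces the veto rule to \emph{cut} every directed cycle: if~$C$ has vertex set~$\mathcal W_C$ and~$(p,u_0),(p,u_1)\in\mathcal S$ with~$u_0\neq u_1$ in~$\mathcal W_C$, then~$\mu_p(i_p)$ can equal at most one of~$u_0,u_1$, so~$i_{u_0}\cdot i_{u_1}=0$ at any fixed point; hence some vertex of every cycle carries input~$0$ and no longer ``closes'' that cycle. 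I would then induct on~$|\mathcal W|$: remove a source (its input is the empty product~$1$), or, failing a source, a cycle vertex that the cutting observation forces to~$0$, in either case absorbing the removed party's output---now a fixed value once that vertex is pinned---into the veto parameters of its children; existence and uniqueness of the global fixed point then follow from the inductive hypothesis applied to the reduced digraph. Chordlessness enters in showing this reduction is legitimate: a chord of a cycle would let one of its vertices draw veto-support from inside the cycle, spoiling both the pinning and the inheritance of the siblings-on-cycles form.

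For causality, note first that a deterministic~$p(\underline a|\underline x)$ can decompose as in Def.~\ref{def:causalcorrelations} only with a point-mass~$p(i)$, so it suffices to exhibit, uniformly in~$\underline x$, a single ``first'' party~$k_0$ whose output~$a_{k_0}$ is a function of~$x_{k_0}$ alone, together with a recursively causal residual. I would take~$k_0$ to be a source of~$M$---whose input is the constant~$1$---or, when~$M$ has no source, a vertex whose input is pinned to~$0$ by the veto rule as in the consistency step; either way~$a_{k_0}$ is determined by~$x_{k_0}$ and a fixed input, hence by~$x_{k_0}$ alone, matching the factor~$p(a_{k_0}|x_{k_0})$. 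Conditioning on~$(a_{k_0},x_{k_0})$ leaves an~$(n-1)$-party veto process on~$M$ with~$k_0$ deleted and its output absorbed, which---I would argue---is again chordless siblings-on-cycles, so the induction hypothesis makes~$p^{k_0}_{a_{k_0},x_{k_0}}$ causal and reassembles the decomposition.

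The main obstacle, shared by the last two parts, is the bookkeeping of this reduction: one must check that ``pinning a cycle vertex to input~$0$'' can be done \emph{uniformly}---in the causality step over all~$\underline x$, not just at one fixed point, which is the delicate case when~$M$ has no source---and that deleting a source or a pinned vertex and absorbing its output preserves both the chordless and the siblings-on-cycles properties. These are the only points where the two hypotheses on~$M$ are used essentially; the remainder is routine.
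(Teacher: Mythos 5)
The paper offers no proof of Theorem~\ref{thm:admissible}: it is imported verbatim from Ref.~\cite{tselentis2022admissible} (the surrounding text says only that it ``combines some known results''), so your primary move---checking that ``chordless siblings-on-cycles'' matches the hypotheses of that reference and quoting its propositions---is exactly what the paper does. Your appended sketch (the complete faithfulness argument, the cycle-cutting observation from the common parent, and the source-or-pinned-vertex induction) is a reasonable reconstruction of the cited proofs, and the gaps you flag---uniformity of the pinning over all inputs and preservation of the two graph properties under the reduction---are precisely the substantive content carried by the reference rather than by this paper.
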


\subsection{Illustration: The classical switch}
The present framework and this theorem are illustrated by the classical switch.
Consider the~$3$-party scenario, and let the causal structure~$M$ be the digraph~$D$ of Figure~\ref{fig:embeddingcycle}.
In this case, the veto model parameters are
\begin{align}
	\omega_0: \{0,\bot\} \times \{0,1,\bot\}&\rightarrow \{0,1\}\\
	(o_1,o_2) &\mapsto [0=o_1][0=o_2]\\
	\omega_1: \{1,\bot\} \times \{0,1,\bot\}&\rightarrow \{0,1\}\\
	(o_0,o_2) &\mapsto [1=o_1][1=o_2]
	\,.
\end{align}
Since party~$2$ has no parents, the model parameter~$\omega_2$ is simply the constant~$1$.
If party~$2$ implements an experiment such that the system on the future boundary of party~$2$ is in the state~$0$, then the above functions---partially evaluated---become
\begin{align}
	\omega_0(o_1, o_2=0) &= [0=o_1]\\
	\omega_1(o_0,o_2=0) &= 0
	\,;
\end{align}
Party~$0$ may now receive a signal from party~$1$.
If, however, the experiment of party~$2$ produces the state~$1$, then
\begin{align}
	\omega_0(o_1, o_2=0) &= 0\\
	\omega_1(o_0,o_2=0) &= [1=o_0]
	\,,
\end{align}
and party~$1$ may receive a signal from party~$0$---just as in the quantum switch ({\it cf.} Figure~\ref{fig:switch}).
Now, note that~$M$ is a chordless siblings-on-cycles graph.
The only directed cycle is~$( (0,1),(1,0))$.
This cycle is chordless, and there exists~$p=2$ such that~$(p,0),(p,1)$ are arcs.
The above theorem thus tells us that, no matter what experiments are carried out by the parties~$\{0,1,2\}$, only causal correlations (see Definition~\ref{def:causalcorrelations}) are attainable.
This example also illustrates the influence of the common parent~$p=2$:
No matter what experiment~$p$ carries out, the information flow around the directed cycle is effectively interrupted.
In fact, this is what ensures consistency.
Were the information flow along the cycle not interrupted, then a party may influence its own past~\cite{baumeler2020qpl}, which yields a disagreement with Eq.~\eqref{eq:fp}.

\subsection{Game-winning strategies}
We present causal models and experiments with which the digraph Bell games (Theorem~\ref{thm:bellgames}) are won deterministically.
The respective causal structures are particularly simple.
\begin{figure}
	\centering
	\def\rotate{45}
	\def\delta{20}
	\def\radius{1.25}
	\pgfmathsetmacro{\labelradius}{\radius*1.3}
	\def\dd{5}
	\pgfmathsetmacro{\redlast}{\rotate+\delta*0.5}
	\pgfmathsetmacro{\blacklast}{180+\redlast}
	\pgfmathsetmacro{\blackfirst}{\rotate-\delta*0.5}
	\pgfmathsetmacro{\redfirst}{180+\blackfirst}
	\begin{subfigure}[t]{.23\textwidth}
		\centering
		\begin{tikzpicture}[every path/.style={-stealth,thick}]
			\node[vertex] (rlast) at (\redlast:\radius) {};
			\foreach \x in {0,1,2} {
				\pgfmathsetmacro{\a}{\blackfirst-\x*\delta}
				\node[vertex] (b\x) at (\a:\radius) {};
				\pgfmathsetmacro{\a}{\blackfirst-(\x-1)*\delta}
				\draw (\a-\dd:\radius) arc(\a-\dd:\a-\delta+\dd:\radius);
			}
			\def\x{0}
			\pgfmathsetmacro{\a}{\blackfirst-\x*\delta}
			\node (labelb\x) at (\a:\labelradius) {$u_0$};
			\def\x{1}
			\pgfmathsetmacro{\a}{\blackfirst-\x*\delta}
			\node (labelb\x) at (\a:\labelradius) {$y_0$};
			\pgfmathsetmacro{\a}{\blackfirst-2*\delta}
			\draw[dashed] (\a-\dd:\radius) arc(360+\a-\dd:\redlast+\dd:\radius);
			\node[vertex,label=\rotate+90:{$p$}] (p) at (0,0) {};
			\draw (p) -- (b0);
			\draw (p) -- (b1);
		\end{tikzpicture}
		\caption{ }
		\label{fig:winningcycle}
	\end{subfigure}
	~
	\begin{subfigure}[t]{.23\textwidth}
		\centering
		\begin{tikzpicture}[every path/.style={-stealth,thick}]
			\node[color1] (blast) at (\blacklast:\radius) {};
			\node (labelblast) at (\blacklast:\labelradius) {$u_{k-1}$};
			\node[color2] (rlast) at (\redlast:\radius) {};
			\node (labelrlast) at (\redlast:\labelradius) {$y_{k-1}$};
			\foreach \x in {0,1,2} {
				\pgfmathsetmacro{\a}{\blackfirst-\x*\delta}
				\node[color1] (b\x) at (\a:\radius) {};
				\node (labelb\x) at (\a:\labelradius) {$u_{\x}$};
				\node[color2] (r\x) at (\a+180:\radius) {};
				\node (labelr\x) at (\a+180:\labelradius) {$y_{\x}$};
				\pgfmathsetmacro{\a}{\blackfirst-(\x-1)*\delta}
				\draw (\a-\dd:\radius) arc(\a-\dd:\a-\delta+\dd:\radius);
				\draw (180+\a-\dd:\radius) arc(180+\a-\dd:180+\a-\delta+\dd:\radius);
			}
			\pgfmathsetmacro{\a}{\blackfirst-2*\delta}
			\draw[dashed] (\a-\dd:\radius) arc(360+\a-\dd:\blacklast+\dd:\radius);
			\pgfmathsetmacro{\a}{\redfirst-2*\delta}
			\draw[dashed] (\a-\dd:\radius) arc(\a-\dd:\redlast+\dd:\radius);
			\node[vertex,label=\rotate+90:{$p$}] (p) at (0,0) {};
			\draw (p) -- (b0);
			\draw (p) -- (r0);
		\end{tikzpicture}
		\caption{ }
		\label{fig:winningbipartite}
	\end{subfigure}
	\caption{Causal structure to win (a) the~$k$-cycle game for~$k<n$, and (b) the~$k$-fence and the~$k$-M\"obius game for~$2k<n$. Remaining isolated vertices are omitted.}
	\label{fig:winning}
\end{figure}
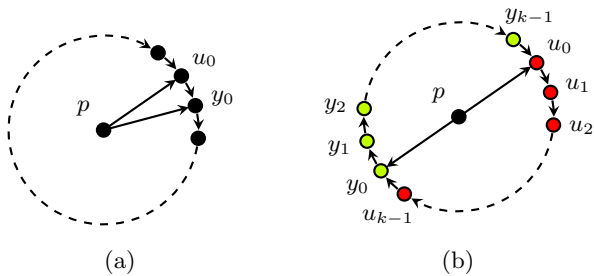
\begin{definition}[Game-winning causal models]%
  \label{def:gwcm}
	The {\em game-winning causal model\/} consists of the following causal structure with the veto model parameters.
	For the~$k$-cycle game~\mbox{$\Gamma(n,D)$} with~\mbox{$D=(\mathcal V\subseteq[n],\mathcal A)$} and~\mbox{$k<n$},
	the causal structure is the digraph~$([n],\mathcal A\cup\{(p,u_0),(p,y_0)\})$,
	where~$p\in[n]\setminus\mathcal V$ and~$(u_0,y_0)\in\mathcal A$ (see Figure~\ref{fig:winningcycle}).
	For the~$k$-fence game~$\Gamma(n,D)$ as well as for the~$k$-M\"obius game with~\mbox{$D=(\mathcal V\subseteq[n],\mathcal A)$} and~\mbox{$2k<n$},
	the causal structure is the digraph with vertices~$[n]$ and arcs
	\begin{align}
		\begin{split}
			&
			\{(u_i,u_{i+1}),(y_i,y_{i+1}) \mid i\in[k-1]\}
			\\
			&
			\cup
			\{(u_{k-1},y_0),(y_{k-1},u_0),(p,u_0),(p,y_0)\}
			\,,
		\end{split}
	\end{align}
	where~$\mathcal U=\{u_i\}_{i\in[k]}$,~$\mathcal Y=\{y_i\}_{i\in[k]}$ correspond to the two-coloring bipartition of~$D$, and~$p\in[n]\setminus\mathcal V$ (see Figure~\ref{fig:winningbipartite}).
\end{definition}
Thanks to Theorem~\ref{thm:admissible}, the game-winning causal model is consistent and faithful.
Next, we specify the experiments to be carried out by the parties.
\begin{definition}[Game-winning experiments]
	For the~$k$-cycle,~$k$-fence, or~$k$-M\"obius game~$\Gamma(n,D=(\mathcal V,\mathcal A))$ and the respective game-winning causal model with the causal structure~$M=([n],\mathcal S)$, the game-winning experiments are the following.
	Whenever the referee announces~$(s,r)\in\mathcal A$ to all parties~$[n]$ and~\mbox{$x\in\{0,1\}$ to party~$s$}, then party~$s$ implements
	\begin{align}
		\mu_s:\mathcal A\times\{0,1\}\times\mathcal I_s &\rightarrow\mathcal O_s\\
		(s,r,x=0,i_s) &\mapsto \bot\\
		(s,r,x=1,i_s) &\mapsto c
		\,,
	\end{align}
	where~$\Ch_M(s)=\{c\}$, {\it i.e.,} party~$s$ ``votes'' for its unique child (with respect to the causal structure ~$M=([n],\mathcal S)$), if and only if~$x=1$,
	party~$r$ (who produces the output~$a$) implements
	\begin{align}
		\mu_{r}:\mathcal A\times\mathcal I_{r} &\rightarrow\{0,1\}\times\mathcal O_{r}\\
		(s,r,i_{r}) &\mapsto (i_{r},\bot)
		\,,
	\end{align}
	{\it i.e.,} party~$r$ produces~$a$ according to the state on the past boundary,
	party~$p$ (who has a controlling role) implements
	\begin{align}
		\mu_{p}:\mathcal A\times\mathcal I_{p} &\rightarrow\times\mathcal O_{p}\\
		(s,r,i_{p}) &\mapsto u_0
	\end{align}
	whenever the directed path from~$s$ to~$r$ in the causal structure~$M$ traverses or ends in~$u_0$, and
	\begin{align}
		(s,r,i_{p}) &\mapsto y_0
	\end{align}
	otherwise,
	and each other party~$k\in\mathcal V\setminus\{s,r,p\}$ implement
	\begin{align}
		\mu_{k}:\mathcal A\times\mathcal I_{k} &\rightarrow\times\mathcal O_{k}\\
		(s,r,i_k=0) &\mapsto \bot\\
		(s,r,i_k=1) &\mapsto c
		\,,
	\end{align}
	where~$\Ch_M(k)=\{c\}$, {\it i.e.,} party~$k$ forwards the signal.
\end{definition}

\begin{theorem}[Tame game-winning causal models]
	\label{thm:winning}
	Let~\mbox{$\Gamma(n,D)$} be a~$k$-cycle game with~\mbox{$k<n$}, a~$k$-fence game with~\mbox{$2k<n$}, or a~$k$-M\"obius game with~\mbox{$2k<n$}.
	If the parties implement the game-winning experiments on the respective game-winning causal model,
	then the parties win the game~$\Gamma(n,D)$ with certainty, {\it i.e.,}~$\Pr[\mathcal W(\Gamma(n,D))]=1$, and yet for all experiments the causal models yields causal correlations only.
\end{theorem}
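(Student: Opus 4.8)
The plan is to treat the two assertions of the theorem in turn: that the game-winning causal model is consistent, faithful, and produces only causal correlations, and that the receiver's output always equals~$x$.

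For the first assertion I would simply verify the hypotheses of Theorem~\ref{thm:admissible}, namely that every game-winning causal structure~$M$ is a chordless siblings-on-cycles digraph. In the cycle case~$M=([n],\mathcal A\cup\{(p,u_0),(p,y_0)\})$ with~$\mathcal A$ a~$k$-cycle on~$\mathcal V$ and~$p\notin\mathcal V$; since~$p$ is a source, the unique directed cycle of~$M$ is~$\mathcal A$, it is chordless, and~$p$ is a common parent of the two cycle-vertices~$u_0$ and~$y_0$. In the fence and M\"obius cases~$M$ consists of the two directed paths~$u_0\to\cdots\to u_{k-1}$ and~$y_0\to\cdots\to y_{k-1}$ closed up by the crossing arcs~$u_{k-1}\to y_0$ and~$y_{k-1}\to u_0$ into a single directed~$2k$-cycle, together with the source~$p$ and the arcs~$p\to u_0$,~$p\to y_0$; again there is a unique directed cycle, it is chordless, and~$p$ witnesses the siblings-on-cycles condition. (The isolated parties in~$[n]\setminus(\mathcal V\cup\{p\})$ lie on no cycle and are irrelevant here; the conditions~$k<n$ and~$2k<n$ are exactly what leaves room for the control vertex~$p$.) Theorem~\ref{thm:admissible} then yields consistency, faithfulness, and causality of the correlations.

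For the second assertion, consistency already guarantees that the process has a unique fixed point~$\underline i$ for each choice of experiments, so it is enough to display it and read off party~$r$'s output. Fix the referee's choice~$(s,r,x)$. Since~$\mu_p$ ignores both~$i_p$ and~$x$, party~$p$ outputs a fixed vertex~$\xi\in\{u_0,y_0\}$; write~$\xi^{\mathrm c}$ for the other one. Because~$p$ is a parent of~$\xi^{\mathrm c}$ and does not vote for it, the veto parameters force~$i_{\xi^{\mathrm c}}=0$, so the junction at~$\xi^{\mathrm c}$ is \emph{closed}, whereas~$\xi$ receives a~$1$ as soon as its cycle-predecessor does. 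Propagating the veto mechanism around the cycle, a cycle-vertex~$v$ ends up with~$i_v=1$ exactly when~$x=1$ and~$v$ lies on the forward sub-path of the (unique, simple) directed cycle that issues from~$s$ --- the sender re-injects the signal precisely when~$x=1$ --- without passing through~$\xi^{\mathrm c}$ as a non-initial vertex and without first reaching~$r$ (which terminates the signal on its future boundary). Granting that~$r$ is such a vertex, we get~$i_r=x$ when~$x=1$; and when~$x=0$ the sender re-injects nothing, so, together with the closed junction at~$\xi^{\mathrm c}$, no cycle-vertex receives a~$1$ and~$i_r=0=x$. In either case party~$r$ outputs~$a=i_r=x$, and uniqueness of the fixed point gives~$\Pr[\mathcal W(\Gamma(n,D))]=1$. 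Combined with the first assertion, this proves the theorem.

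The step I expect to be the main obstacle is the last ingredient: showing that the forward directed path from~$s$ to~$r$ in~$M$ really does avoid~$\xi^{\mathrm c}$ away from its starting vertex, for \emph{every} arc~$(s,r)$ of the cycle-, fence-, and especially the M\"obius-digraph. This is precisely what the controller's rule ``output~$u_0$ iff the~$s$-to-$r$ path traverses or ends in~$u_0$'' is designed to ensure, and I would confirm it by a case analysis over the arc types --- the ``vertical'' arcs (rungs) and the ``crossing'' arcs --- tracking where the antipodal vertices~$u_0$ and~$y_0$ sit on that forward path relative to~$s$ and~$r$. Bipartiteness of the fence and M\"obius digraphs and the antipodality of~$u_0,y_0$ on the~$2k$-cycle keep this bookkeeping finite and routine, but it is where the concrete combinatorics of the three digraph families actually enters.
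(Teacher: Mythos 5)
Your proposal is correct and follows essentially the same route as the paper: the causality claim by verifying the chordless siblings-on-cycles hypotheses of Theorem~\ref{thm:admissible} (which the paper invokes in one line), and the winning claim by noting that $p$ cuts the unique directed cycle at the one vertex of $\{u_0,y_0\}$ not lying on the $s$-to-$r$ path and then propagating the veto signal along that path. The combinatorial check you flag as the main remaining obstacle is exactly the step the paper settles with the same observation you name---bipartiteness of the fence and M\"obius digraphs forces $s$ and $r$ to have different colors, so the unique directed path from $s$ to $r$ meets at most one of $u_0,y_0$ as a non-initial vertex.
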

\begin{proof}
	The latter part follows from Theorem~\ref{thm:admissible}.
	For the former part, suppose the referee announces~$(s,r)\in\mathcal A$ to all parties~$[n]$, and~$x\in\{0,1\}$ to party~$s$, and let~$\pi$ be the directed path from~$s$ to~$r$ on the game-winning causal structure~$M$.
	First, note that~$\pi$ is unique, and in the case of the~$k$-cycle game,~$\pi$ consists of the single arc~$(s,r)$.
	In the case of the~$k$-fence or~$k$-M\"obius game,~$s$ has a different color than~$r$.
	Thus, in all cases, the directed path~$\pi$ traverses or ends in at most one of~$u_0$ and~$y_0$.
	Thanks to the experiment of party~$p$, each party~$k$ on the path~$\pi$, and~$r$, may receive a signal from~$\Pa_M(k)\setminus\{p\}$.
	Now, party~$s$ votes for its unique child~$c$ only if~$x=1$.
	If~$c=r$, then according to~$\mu_c$, party~$r$ outputs~$a=x$.
	Else, party~$c$ forwards the signal to its unique child, {\it etc.}
\end{proof}

\section{Discussion: Relativistic setting}
We exploit the digraph Bell games presented in Theorem~\ref{thm:bellgames} to operationally distinguish between special and general relativity.
To arrive at this, we present two relativistic settings that, if embodied in special relativity, will never lead to any violation of the inequalities.
The first setting is general enough to cover all games presented here.
The second setting is simpler than the first, but inapplicable to the~$k$-cycle games for odd~$k$.
The reason for this is that the second setting requires the digraph to be two-colorable.

In a second part, we argue that if the settings are embodied in general relativity, then all presented inequalities may be violated.
Such a violation may be due to two characteristic features of general relativity:
(i) dynamic spacetime, i.e., the dependency of the spacetime structure from the distribution of matter, or
(2) the exotic and possibly fictitious feature of closed time-like curves (which are known to be compatible with general relativity~\cite{lanczos1924,godel1949,thorne1993,luminet2021}).\footnote{%
  In fact, classical-deterministic processes, such as those presented above (see, Definition~\ref{def:gwcm}) can in principle be implemented using ``tame'' closed time-like curves~\cite{baumeler2022a}.
}
Here, however, we disregard such causality violating descriptions~\cite{hawking1992}, and instead assume that the spacetime structure is free of such solutions.
Thus, a~violation of the presented inequalities would certify (1), the back-action of matter to the spacetime.

Crucial to the present endeavor is the notion of {\em event.}
It is central to understand that the Bell games presented provide limits on the correlations where the {\em events\/} form a partial order.
The quantum switch, mentioned in the introduction, has been demonstrated experimentally with quantum-optics tabletop experiments (see Ref.~\cite{goswami2020a} for a review).
Clearly, no significant gravitational effects entered these experiments.
An event, there, is understood as the reception and emission of a signal.
So, with that notion of event, our program fails.
Instead, we propose to use the commonly accepted notion of event in relativity {\em defined as the crossing of light beams, i.e., of null geodesics.}

\subsection{Special relativity}
In the following, we describe two settings where the relevant events in special relativity will {\em always\/} form a partial order.
The central ingredient is the same:
We define non-overlapping causal diamonds and convey the events to happen within these diamonds.
In special relativity---but in fact also in any curved spacetime, as long as it is fixed---the causal relations among the diamonds form a~partial order: the inequalities cannot be violated.
For our presentation, a single spatial dimension is sufficient (but not necessary).

\subsubsection{General setup}
Consider the three-agent setup schematically represented in Figure~\ref{fig:relativity}.
This setup is straight forwardly generalizable to any number of agents.
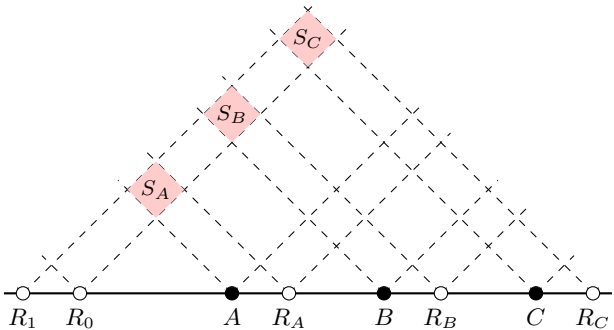
\begin{figure}
	\centering
	\begin{tikzpicture}
		\pgfmathsetmacro{\off}{0.75}
		\pgfmathsetmacro{\xrf}{0.25}
		\pgfmathsetmacro{\xr}{\xrf+\off}
		\pgfmathsetmacro{\xa}{3}
		\pgfmathsetmacro{\xb}{5}
		\pgfmathsetmacro{\xc}{7}
		\pgfmathsetmacro{\xra}{3+\off}
		\pgfmathsetmacro{\xrb}{5+\off}
		\pgfmathsetmacro{\xrc}{7+\off}
		\pgfmathsetmacro{\xw}{8}
		\draw[thick] (0,0) -- ++(\xw,0);
		\foreach \x in {\xrf,\xr,\xa,\xra,\xb,\xrb,\xc,\xrc} {
			\pgfmathsetmacro{\cleft}{\x/sqrt(2)}
			\pgfmathsetmacro{\cright}{(8-\x)/sqrt(2)}
			\draw[dashed] (\x,0) -- ++(45:\cright);
			\draw[dashed] (\x,0) -- ++(180-45:\cleft);
		}
		\node[ccc,label={below:$R_1$}] (R1) at (\xrf,0) {};
		\node[ccc,label={below:$R_0$}] (R0) at (\xr,0) {};
		\node[ccc,label={below:$R_A$}] (RA) at (\xra,0) {};
		\node[ccc,label={below:$R_B$}] (RB) at (\xrb,0) {};
		\node[ccc,label={below:$R_C$}] (RC) at (\xrc,0) {};
		\node[vertex,label={below:$A$}] (A) at (\xa,0) {};
		\node[vertex,label={below:$B$}] (B) at (\xb,0) {};
		\node[vertex,label={below:$C$}] (C) at (\xc,0) {};
		\pgfmathsetmacro{\h}{\off/(2*cos(45)}
		\pgfmathsetmacro{\xea}{\xr+(\xa-\xr)/2}
		\pgfmathsetmacro{\yea}{tan(45)*(\xea-\xr)}
		\path[fill=red!20!white] (\xea,\yea) -- ++(45:\h) -- ++(180-45:\h) -- ++(180+45:\h) -- cycle;
		\pgfmathsetmacro{\xeb}{\xr+(\xb-\xr)/2}
		\pgfmathsetmacro{\yeb}{tan(45)*(\xeb-\xr)}
		\path[fill=red!20!white] (\xeb,\yeb) -- ++(45:\h) -- ++(180-45:\h) -- ++(180+45:\h) -- cycle;
		\pgfmathsetmacro{\xec}{\xr+(\xc-\xr)/2}
		\pgfmathsetmacro{\yec}{tan(45)*(\xec-\xr)}
		\path[fill=red!20!white] (\xec,\yec) -- ++(45:\h) -- ++(180-45:\h) -- ++(180+45:\h) -- cycle;
		\pgfmathsetmacro{\yd}{sqrt(2*\h*\h)/2}
		\node at ($ (\xea,\yea) + (0,\yd) $) {\footnotesize$S_A$};
		\node at ($ (\xeb,\yeb) + (0,\yd) $) {\footnotesize$S_B$};
		\node at ($ (\xec,\yec) + (0,\yd) $) {\footnotesize$S_C$};
	\end{tikzpicture}
	\caption{The spacetime regions~$S_A,S_B,S_C$ form a partial order.
		Agent~$A$ may only obtain the input~$x$ and produce the output~$a$ in the region~$S_A$, and similarly for the agents~$B$ and~$C$.
		Therefore, the correlations~$p(a,b,c|x,y,z)$ decompose as a partial order, and no violation of the presented games can be observed.
	}
	\label{fig:relativity}
\end{figure}
Here, the agents~$A,B,C$ are initially spacelike separated, and so are the respective referees~$R_X$ for~$X\in\{A,B,C,0,1\}$.
In this setup, the referee~$R_0$ carries the inputs to the agents,
and the referees~$R_1,R_X$ for~$X\in\{A,B,C\}$, who travel at the speed of light, ensure that the outputs of the agents are timely produced.
For simplicity, we can imagine~$R_0$ to emit a light signal that carries the inputs.
The information encoded in this light signal must be encrypted in a way such that each agent cannot obtain the input to any other agent.
Also, we can imagine~$R_1$ and~$R_X$, for~$X\in\{A,B,C\}$, to emit light signals that switch off any detection device of agent~$X$.

By this setup, the input to each agent~$X\in\{A,B,C\}$ is only available in the intersection of the future lightcones of the initial location of~$X$ and~$R_0$.
If~$R_1$ and~$R_X$ meet without having received any output from agent~$X$, then the game is aborted.
This means that each agent~\mbox{$X\in\{A,B,C\}$} may only receive the input and produce the output in the spacetime region~$S_X$, {\it i.e.,} the event~$E_X$ is in
\begin{align}
	S_X:=J^+(R_0)\cap J^+(X) \setminus J^+(R_1)\cup J^+(R_X)
	\,,
\end{align}
where~$J^{+}(e)$ denotes the future lightcone of the event~$e$.
The spacetime regions~$\{S_X\}_{X\in\{A,B,C\}}$ are non-overlapping causal diamonds, and form a partial order.
Thus, the correlations~$p(\underline a|\underline x)$ attainable in this relativistic setting are partial-order correlations;
the presented Bell inequalities, by definition, cannot be violated.

\subsubsection{Two-colorable setup} 
Assume the digraph~$D=([2k],\mathcal A)$ that defines the game is two-colorable.
This is the case for the~$2k$-cycle digraph, and for the~$k$-fence and $k$-M\"obius digraphs.
Now, we bipartition the $2k$ vertices into the disjoint sets~$\mathcal V_L,\mathcal V_R$ according to the digraphs' two-coloring.
Any arc~$(s,r)\in\mathcal A$ announced by the referee will be such that~$s$ and~$r$ are of {\em different\/} color.
From this we learn that any communication among agents of the {\em same\/} color does not aid in winning the game.
This bipartition allows us to simplify the above setup.
In contrast to the first setup, where we define a causal diamond per event, we define a causal diamond per color:~$S_L$ for the vertices of color $L$, and~$S_R$ for the vertices of color~$R$.
In addition, we add a third causal diamond~$S_C$ in the causal past of~$S_L$ and~$S_R$.
This latter spacetime region will contain the event~$E_{n-1}$ of some extra agent~$n-1=2k$.
Without loss of generality, we may assume that~$S_L$ is spacelike separated form~$S_R$.
The causal arrangement of these diamonds is shown in Figure~\ref{fig:relativity2}.

\begin{figure}
	\centering
  \begin{tikzpicture}
    \tikzset{ccc/.style={vertex,fill=white}}
    \pgfmathsetmacro{\angle}{45}
    \pgfmathsetmacro{\xmax}{3.5}
    \draw[thick] (-\xmax,0) -- (\xmax,0);
    \pgfmathsetmacro{\off}{0.75}
    \pgfmathsetmacro{\xc}{0}
    \pgfmathsetmacro{\xcl}{-\off}
    \pgfmathsetmacro{\xcr}{+\off}
    \pgfmathsetmacro{\xlr}{-2}
    \pgfmathsetmacro{\xll}{\xlr-\off}
    \pgfmathsetmacro{\xrl}{+2}
    \pgfmathsetmacro{\xrr}{\xrl+\off}
    \pgfmathsetmacro{\cmax}{3}
    \foreach \x in {\xlr,\xc,\xrl} {
      \pgfmathsetmacro{\cleft}{min((\xmax+\x)/cos(\angle),\cmax)}
      \pgfmathsetmacro{\cright}{min((\xmax-\x)/cos(\angle),\cmax)}
      \draw[nulllike] (\x,0) -- ++(\angle:\cright);
      \draw[nulllike] (\x,0) -- ++(180-\angle:\cleft);
    }
    \foreach \x in {\xll,\xcl} {
      \pgfmathsetmacro{\cright}{min((\xmax-\x)/cos(\angle),\cmax)}
      \draw[->] (\x,0) -- ++(\angle:\cright);
    }
    \foreach \x in {\xrr,\xcr} {
      \pgfmathsetmacro{\cleft}{min((\xmax+\x)/cos(\angle),\cmax)}
      \draw[->] (\x,0) -- ++(180-\angle:\cleft);
    }
    \pgfmathsetmacro{\xright}{\xrl/2}
    \pgfmathsetmacro{\xleft}{-\xright}
    \pgfmathsetmacro{\xdelta}{\xright}
    \pgfmathsetmacro{\yup}{\xdelta*tan(\angle)}
    \pgfmathsetmacro{\h}{\off/(2*cos(\angle)}
    \pgfmathsetmacro{\yd}{sqrt(2*\h*\h)/2}
    \foreach \x/\y/\l in {\xleft/\yup/L,\xright/\yup/R,0/0/C} {
      \path[fill=red!20!white] (\x,\y) -- ++(\angle:\h) -- ++(180-\angle:\h) -- ++(180+\angle:\h) -- cycle;
      \node at ($ (\x,\y) + (0,\yd) $) {\footnotesize$S_\l$};
    }
    \foreach \x/\l in {\xc/C,\xlr/L,\xrl/R} {
      \node[ccc,label={below:$\An_{\l}$}] (An\l) at (\x,0) {};
    }
    \foreach \x/\l/\s in {\xll/L/\rightarrow,\xrr/R/\rightarrow,\xcl/R/\rightarrow,\xcr/L/\leftarrow} {
      \node[ccc,fill=black,label={below:$\Co^{\s}_{\l}$}] (R\l) at (\x,0) {};
    }
  \end{tikzpicture}
  \caption{%
    We define three non-overlapping causal diamonds.
    These diamonds are specified by the initial spatial configuration of the announcers ($\An$) and collectors ($\Co$).
  }
	\label{fig:relativity2}
\end{figure}

In order to ensure that the events happen within the respective spacetime regions, we propose the following setup.
Here, the referee is split into several announcers and collectors. 
The role of the announcers is to distribute the inputs to the agents.
The central announcer~$\An_C$ (see Figure~\ref{fig:relativity2}) broadcasts the arc~$(s,r)\in\mathcal A$, and a random bit~$x_\text{center}$.
The left announcer~$\An_L$ broadcasts a~random bit~$x_\text{left}$, the right announcer~$\An_R$ a random bit~$x_\text{right}$.
The semantics is the following.
Whenever~$s$ is of color $L$, then the random bit~$x$ to be guessed by~$r$ is~$x:=x_\text{center}\oplus x_\text{left}$, otherwise it is~$x:=x_\text{center}\oplus x_\text{right}$.
With this, the publicly announced arc~$(s,r)$ is accessible within~$J^+(\An_C)$.
The random bit~$x$, however, is only accessible within~$J^+(\An_C)\cap J^+(\An_\chi)$, where~$\chi\in\{L,R\}$ is the color of agent~$s$.
The role of the collectors, then again, is that the outputs of the agents are timely produced.
The collectors travel at speed of light in the direction indicated.
All agents of color~$\chi$ must produce their output the latest when~$\Co_\chi^\rightarrow$ and~$\Co_\chi^\leftarrow$ meet.
The extra agent~$2k$ must produce the output the latest when~$\Co_R^\rightarrow$ meets~$\Co_L^\leftarrow$.
If an agent did not produce an output by then, then the collectors abort the game.

These rules ensure that the events~$\{E_i\}_{i\in\mathcal V_\chi}$ are confined to the causal diamond
\begin{equation}
  S_\chi:=J^+(\An_C)\cap J^+(\An_\chi)\setminus J^+(\Co^\rightarrow_\chi)\cup J^+(\Co^\leftarrow_\chi)
  \,,
\end{equation}
and that~$E_{n-1}$ happens in the causal diamond
\begin{equation}
  S_C:=J^+(\An_C)\setminus J^+(\Co^\rightarrow_R)\cup J^+(\Co^\leftarrow_L)
  \,.
\end{equation}
The initial configuration, moreover, ensures that the causal diamonds are non-overlapping.
If these rules are enforced on top of special relativity, then the causal relations among the three causal diamonds form a partial order: The inequalities cannot be violated.

\subsection{General relativity}
As shown above (see Theorem~\ref{thm:winning}), the games presented are deterministically won with {\em causal\/} correlations.
This suggests that these games may also be won within general relativity (and likely with globally hyperbolic spacetime structures).
In general relativity, the distribution of matter determines the spacetime structure, and hence also the trajectories of particles and light.
In the suggested setups, the events are confined to causal diamonds.
The start and end points of these causal diamonds, then again, are defined via intersecting null geodesics.
Thus, since the distribution of matter influences the relative position of these intersections, it influences the causal relation between the causal diamonds.
With this, in particular, the causal relation between any relevant pair of events is determined by the distribution of matter within their past light cone.
Thus, one might imagine that it is in principle possible to arrange the distribution of matter in the events' common past to achieve either causal order.
This distribution, then again, could be arranged by an agent in the common past controlling the relevant pieces of matter.

For instance, consider the two-cycle game defined by the digraph~$D=([B,C],\{(B,C),(C,B)\})$, played with the three agents~$\{A,B,C\}$ arranged as in Figure~\ref{fig:relativity}.
In case the referee announces~$(B,C)$ and some~$x\in\{0,1\}$ to~$B$, then trivially,~$C$ may output~$x$.
Instead, if the referee announces~$(C,B)$ and some~$x\in\{0,1\}$ to~$C$, then in the special relativistic setting,~$B$ may at best guess~$x$ with half probability.
In a general-relativistic setting, however, agent~$A$ may alter the spacetime structure within its future lightcone:
Depending on the announced arc,~$(B,C)$ or~$(C,B)$, agent~$A$ may arrange the distribution of matter in such a way that the desired causal relation between the spacetime regions $S_B$ and $S_C$ is obtained.
In the second setup, the agent~$n-1=2k$ could in principle specify a distribution of matter such that the causal relation between events of different colors are determined by its will.
Here, we must also remark that general relativity is a deterministic theory.
The agent must thus be understood as deterministically and mechanistically carrying out a computation that displaces the matter distribution as desired.
Also, in order to arrive at the probabilities~$p(a|s,r,x)$, the experiment must be repeated over time with identical initial conditions, or repeated in parallel.

Note that whether such general-relativistic violations are feasible requires a formalization and a thorough analysis of the outlined strategies within the framework of general relativity. This remains a main challenge to our work and poses a central question among others.

\section{Conclusions and open questions}
We derived families of collaborative multiparty games that, if the winning chance exceeds the described limit, prove the parties' influence on the causal relations.
These games are formulated with directed graphs, and are rather simple:
A referee picks an arc~$(s,r)$, and asks party~$s$ to communicate a random bit to party~$r$.
The purpose of these games is to detect the dynamical spacetime structure present in general relativity and absent in special relativity.
This, however, can also be phrased purely within the theory of general relativity.
First, note that the spacetime structure in general relativity is said to be dynamical because matter exerts a back-action on it.
Thus, one can regard these Bell tests also as tests for the displacement, and therefore also for the presence, of matter.
Moreover, these games serve as device-independent test for {\em indefinite causal order,} as exhibited by the quantum switch.
It is known that the causal structure~$M$ of any unitary quantum process with indefinite causal order is cyclic~\cite{barrett2021nature}.
A candidate game to detect the dynamical component of the indefinite causal order is then simply obtained from the cyclic part of~$M$.

The game-winning strategies for the~$k$-cycle, the~$k$-fence, and the~$k$-M\"obius games require more parties than nodes in the graph.
It is possible, however, to design a game-winning causal model for the excluded case, as well as for the game in the ``all-to-one'' scenario (which is {\em nota bene\/} a causal game):
The Svetlichny-inspired causal model from Ref.~\cite{baumeler2020} has as causal structure the complete directed graph~$K^\text{di}_n$ among all parties.
By using that causal model, it is therefore trivial to communicate a bit from {\em any\/} party to {\em any other.}
The correlations that arise there, however, are incompatible with any global ordering of the parties, dynamical or not.

Our work raises a series of open questions, with the central one:
What is the precise general-relativistic description of the game-winning causal models?
The detection of dynamical causal structure can be understood as the detection of the curvature's change in the spacetime manifold.
Thus, an immediate follow-up question is whether and to what extent these games detect {\em gravitational waves.}
Speculatively, it might be possible to reinterpret the data collected at the LIGO experiments (see, {\it e.g.,} Ref.~\cite{ligo}) as a~violation of an inequality presented.

Understanding the informational content of general relativity might be beneficial for future research, especially when merging quantum theory with general relativity.
Towards that, not only answers to the above questions but also a mathematical formalization of the ``dynamics'' of causal structures might be helpful.
How do the experiments implemented in {\em local\/} regions alter the causal relation between events in their common future?

It is suspected that enumerating all facet-defining inequalities for the acyclic subdigraph polytope is infeasible:
The more nodes are involved, the more structure enters the polytope, and novel facets emerge~\cite{acyclic1985}.
In fact, deciding whether a vector is a member of the acyclic subdigraph polytope is NP-complete~\cite{garey1979}.
This raises two questions.
Firstly, inspired by the M\"obius inequality, one might wonder whether an orientation of the {\em Klein-bottle graph\/} would give rise to a facet-defining inequality or not.
After all, the Klein bottle is a generalization of the M\"obius strip.
Secondly, the computational difficulty mentioned above suggests that deciding whether some~$n$-party correlations are compatible with a partial order or not is NP-complete as well.
Note that this holds for local correlations in the context of quantum nonlocality~\cite{pitowsky1989}.

Finally, it is known~\cite{hardin2008} that the {\em axiom of choice\/} in set theory yields an advantage for the following game played among infinitely many players situated on a line:
Each player carries a hat with a random color, red or black, sees only the players in front, and must guess the own hat color.
This game, in fact, forces us to revise the concept of nonsignaling in physical theories~\cite{baumeler2022a}.
Now, the bounds on the games presented here are indifferent whether the players' actions form a partial or a {\em total order:}
We can equivalently assume the same configuration as for the ``hats'' game.
Does the axiom of choice also allow for an advantage for the ``infinite''-versions of the games presented?

\noindent
{\bf Acknowledgments.}
We thank Luca Apadula, Marios Christodoulou, Flavio Del Santo, Andrea Di Biagio, and Stefan Wolf for helpful discussions, and an anonymous referee for their valuable comments.
EET thanks Alexandra Elbakyan for providing access to the scientific literature.
EET is supported by the Austrian Science Fund (FWF) through ZK3 (Zukunftskolleg).
\"AB is supported by the Swiss National Science Foundation (SNF) through project~182452 and project 214808.

\bibliography{references.bib}
\end{document}